\newcolumntype{d}[1]{D{.}{.}{#1}}
\newtheorem{thm}{Theorem}
\newtheorem{lemma}{Lemma}
\newtheorem{cor}{Corollary}
\newtheorem{DF}{Definition}
\newtheorem{rem}{Remark}
\xdef\csname vec\x \endcsname{\noexpand\ensuremath{\noexpand\bm{\x}}}
\xdef\csname vec\x \endcsname{\noexpand\ensuremath{\noexpand\bm{\x}}}
\xdef\csname c\x \endcsname{\noexpand\ensuremath{\noexpand\mathcal{\x}}}
\xdef\csname bb\x \endcsname{\noexpand\ensuremath{\noexpand\mathbb{\x}}}
\newcommand{\defn}{: \, =}
\newcommand{\Lnorm}[2]{\|#1\|_{#2}}
\newcommand{\inprod}[2]{\ensuremath{\langle #1 , \, #2 \rangle}}
\newcommand{\kl}[2]{\ensuremath{D_{\mathsf{KL}}(#1\|#2)}}
\newcommand{\argmin}{\operatornamewithlimits{arg~min}}
\newcommand{\pr}[1]{\mathbb{P}\{ #1 \}}
\newcommand{\var}[1]{\mathsf{Var}( #1 )}
\newcommand{\ind}[1]{\mathbb{I}_{[ #1 ]}}
\newcommand{\Lo}[1]{\|#1\|_{1}}
\newcommand{\Lt}[1]{\|#1\|_{2}}
\newcommand{\Li}[1]{\|#1\|_{\infty}}
\newcommand{\eps}[1]{\epsilon_{1}^{#1}}
\newcommand{\ep}[2]{\epsilon_{#2}^{#1}}
\newcommand{\rn}[1]{r_{1}^{#1}}
\newcommand{\sn}[1]{\mathsf{Sat}(#1)}
\begin{document}

\title{Mean Estimation Under Heterogeneous  Privacy Demands}

\author{Syomantak Chaudhuri, Konstantin Miagkov, Thomas A. Courtade \thanks{A preliminary version of this work appeared in IEEE International Symposium of Information Theory 2023 \cite{isit-paper}. \\
S. Chaudhuri and T. A. Courtade are with the Department of Electrical Engineering and Computer Sciences at University of California, Berkeley, CA, USA (email: syomantak,courtade@berkeley.edu). \\
K. Miagkov is with the Department of Mathematics at Stanford University, CA, USA (email: kmiagkov@stanford.edu)

}}

\maketitle

\begin{abstract}
    Differential Privacy (DP) is a well-established  framework to quantify  privacy loss incurred by any algorithm. 
    Traditional formulations impose a uniform privacy requirement for all  users, which is often inconsistent with real-world scenarios in which users dictate their  privacy preferences individually. 
    This work considers the problem of mean estimation, 
     where each user can impose their own distinct privacy level.
    The algorithm we propose is shown to be minimax optimal and has a near-linear run-time.
    Our results elicit an interesting saturation phenomenon that occurs.  
    Namely, the privacy requirements of the most stringent users dictate the overall error rates.
    As a consequence, users with less but differing privacy requirements are all given more privacy than they require, in equal amounts.
    In other words, these privacy-indifferent users are given a nontrivial degree of privacy for free, without any sacrifice in the performance of the estimator.
 
\end{abstract}

\begin{IEEEkeywords}
Heterogeneous Differential Privacy, Mean Estimation, Minimax Optimality
\end{IEEEkeywords}

\section{Introduction}

Increased computing power and storage technology, combined with fast internet, have made data a precious commodity. Platforms including social media and tech companies have established markets for data that are of great value for targeted advertisements \cite{Data17,Data21}.
In tandem, the ever-growing digital footprint we have has led to a rise in privacy concerns. 
Thus, privacy-preserving techniques in data mining and statistical analysis are important and, at times, mandated by laws such as the GDPR in Europe \cite{GDPR} and the California Consumer Privacy Act (CCPA) \cite{CCPA}.

The study of privacy-preserving techniques for data analysis is an old concern \cite{Hoffman69,Survey00,Rao18}, and simple techniques such as not answering queries specific to a small portion of the dataset do not provide adequate privacy  \cite{Sch75}.
While several notions of privacy have been proposed, such as $K$-Anonymization \cite{Samarati98}, L-diversity \cite{LDiv06}, information-theoretic notions (for example, see \cite{Asoodeh14}), and randomization techniques, the current de-facto standard for privacy -- Differential Privacy (DP) -- was proposed by Dwork et al. \cite{DW06,Dwork06}. 
DP is used in real-world applications by the US Census Bureau \cite{Abowd18}, Google \cite{Erlin14}, and Apple \cite{Tang17}.
One convenient property of DP is that it allows quantification of loss in privacy of an algorithm, as opposed to privacy being a binary property.
Recent extensions of DP include Renyi-DP   \cite{Mironov17}, Concentrated-DP   \cite{Dwork16}, and Zero-Concentrated-DP   \cite{Bun16}.

One of the key tasks in machine learning and statistics is estimating parameters of a distribution given independently drawn samples from it. 
When the observations correspond to sensitive information, such as user data on social media, the need for privacy arises.
Therefore, statistical problems like mean estimation under privacy constraints are important, and we must understand the implicit trade-off between accuracy and privacy. 
The majority of existing  literature on this topic considers a uniform privacy level for all users (see \cite{Wang20} for reference).  However, this does not capture the real-world, where heterogeneous privacy requirements are ubiquitous  (e.g., \cite[Example 2]{Kotsogiannis20}).  Indeed, on virtually all digital platforms, users independently balance their individual privacy options against the utility they desire, typically through a menu of  options provided by the platform.

\subsection{Our Contribution}
We consider  mean estimation  under the Central-DP (\textit{CDP}) model, with heterogeneous privacy demands. 
In the CDP model, also known as the Trusted-Curator model, users send their true data to a central server which is expected to respect the privacy constraint set forth by the users \cite{DW06,Dwork06}.  We assume each user has a datapoint sampled from an unknown distribution, and users are allowed to select their own privacy level. 
The class of distributions is assumed to be univariate and bounded in a known range, and we consider the minimax expected squared error as our metric.  We propose a certain affine estimator with judiciously chosen weights, and prove it to be minimax optimal.
Our algorithm for computing the weights is efficient and has a near-linear (in the number of users) run-time and linear space complexity.

As is  the case in homogeneous DP, keeping the privacy requirement of some users fixed, one might expect to get better accuracy in mean estimation as the privacy of the other users are relaxed.
However, we show that after a certain critical value, decreasing the privacy provides no further improvement in the accuracy of our estimator.
By matching upper and lower bounds, we show that this phenomenon is fundamental to the  problem itself and not an artifact of our algorithm.
As a corollary of this saturation phenomenon, having an additional public dataset may have no extra benefit for mean estimation than a private dataset.
Thus, the central-server can advertise and offer some extra privacy up to the critical value to the privacy-indifferent users while not sacrificing the estimation performance. 
Experiments confirm the superior performance of our proposed algorithm over other baseline methods.
In addition, the approach for showing that the upper and the lower bounds are within constant factor of each other may be of independent interest to the readers.

\subsubsection*{Organization}
In Section~\ref{sec:PD}, we define the problem setting.
The proposed algorithm along with the upper bounds are presented in Section~\ref{sec:UB}, and the lower bound is presented in Section~\ref{sec:LB}.
The fact that the lower and the upper bounds are within a constant factor of each other is proved in Section~\ref{sec:OPT}.
Experiments and other baseline methods are presented in Section~\ref{sec:Exp} to support the theoretical claims made in this work. 
Conclusions and possible future directions are outlined in Section~\ref{sec:Con}.

\subsection{Related Work}
Estimation error in the homogeneous DP case has been studied in great detail in the recent years (see \cite{Kamath19,Hopkins22,Kamath20P,Cai19}) under both the CDP model and the Local-DP (\textit{LDP}) model. 
In the LDP model, users do not trust the central server and send their data through a noisy channel to the server to preserve privacy \cite{Kasiviswanathan11,Duchi16}. 

Tasks like query release, estimation, learning, and optimization have been considered in the setting of a private dataset assisted by some public data \cite{Bassily20,Bassily20-learn,Liu21,Alon19,Nandi20,Kairouz21,Amid22,Wang19}. 
An importance sampling scheme to release statistical queries on a private dataset using a public dataset based on logistic regression is proposed in \cite{Ji13}. 
This method is unsuitable for our setting, where all users have the same underlying distribution.
Bie et al. \cite{Kamath22} consider using a few public samples to estimate Gaussian distributions with unknown mean and covariance matrix. 
The public samples eliminate the need for prior knowledge on the range of mean, but the effect on accuracy with more public samples is not studied.
One-Sided DP to combat `exclusion attacks' to protect a private dataset in the presence of a public dataset is considered in \cite{Kotsogiannis20}.
Another form of heterogeneity is a hybrid model where some users are satisfied with the CDP model while other users prefer the LDP model \cite{Avent17,Beimel19}.
Heterogeneous DP \textit{(HDP)} for federated learning is considered in \cite{Liu21Pj,Aldaghri21}. 
Alaggan et al. \cite{Alaggan17} give a general recipe for dealing with HDP but their idea of scaling the data using a shrinkage matrix induces a bias in the estimator. Further, their approach can not deal with public datasets.

Personalized Differential Privacy (\textit{PDP}) is another term for HDP in literature. Li et al. \cite{Li17Par} studied PDP and proposed a computationally expensive way to partition users into groups with similar privacy levels. 
Jorgensen et al. \cite{Jorg15} propose a mechanism that samples users with high privacy requirements with less probability and then the sub-sampled dataset can then be analyzed with the conventional homogeneous DP algorithms under a suitable privacy level.
While this is a general approach for dealing with heterogeneity, it is not optimal for mean estimation. 
Indeed, following the recommendation of the author for setting the parameters of the algorithm, in the presence of some public data in the dataset, the mechanism would ignore all private data.
HDP mean estimation under the assumption that the variance of the unknown distribution is known is considered by Ferrando et al. \cite{Ferrando21}.
However, as they mention, they add more noise than necessary for privacy. 
The reason why they add more noise than required is that they are essentially performing LDP instead of the more powerful CDP technique. 
As a result, no saturation phenomenon can be deduced in their method.
Some works \cite{Jorg15,Niu20} also consider the PDP setting for finite sets and give algorithms inspired by the Exponential Mechanism \cite{McSherry07}. 
Heterogeneous privacy problems for recommendation systems is also considered in \cite{Li17CF,Zhang19}.
 PDP in the LDP setting has been studied by \cite{Chen16} for learning the locations of users from a finite set of possible locations.
More general notions of DP which encompasses HDP have also been considered in literature \cite{Torkamani22,Chatzikokolakis13}. 
Another line of work in literature considers DP under heterogeneity in the data \cite{Noble22,Gupta22}.
Mean estimation, in the Bayesian setting, from user data in a private manner is considered in \cite{Cum22} where the heterogeneity is in the number of samples and distribution  of each user's data, and not in the desired privacy levels.

Most closely related to the present work is that of Fallah et al. \cite{Asu22}, which considers the general HDP setting for mean estimation in the context of efficient auction mechanism design from a Bayesian perspective. 
While they encounter a saturation-type phenomenon in their algorithm, it cannot tightly characterize the saturation condition (see Section~\ref{sec:Exp}).
In addition, for the homogeneous case, their algorithm does not agree with the minimax optimal estimator when they privacy demand is low ($\epsilon > 1/\sqrt{n}$ where $n$ is the total number of users and $\epsilon$ is the DP privacy-level defined in the next Section). 
They also assume that all the privacy levels are less than $1$. 
This assumption is central to their upper and lower bounds, and therefore one cannot draw conclusions for the case when there is a public dataset.  
Section~\ref{sec:Exp} contains more comparisons of our proposed method with that of \cite{Asu22}.

Concurrent with our work \cite{isit-paper}, 
Cummings et al.~\cite[Section 6]{Cummings23} consider the same optimization problem that we solve to obtain an affine estimator of the mean.
They also make observations related to  the solution structure which are similar to ours (Lemma~\ref{lem:domain}), including the saturation phenomenon. 
In their solution, to find the optimal weights and the optimal noise level, one needs to know the index of saturation in their Claim 7.
The method they prescribe is via enumerating the $n+1$ indices, where $n$ is the total number of users. 
While they point out that this requires $O(n)$ time, this does not provide a full accounting of the complexity for two reasons. 
First, sorting the privacy levels of the agents requires $n\log n$ time.
Second, their method makes $O(n)$ calls to a function that requires $O(n)$ compute time, rendering an overall complexity of $O(n^2)$.
ADPM, our proposed solution, has a $O(n \log n)$ runtime ($O(n)$ if we don't consider sorting complexity).
We are able to do this since  ADPM finds the index of saturation naturally without a brute-force search.
Further, the approach we use is different from theirs and provides some more insight on how the weights depend on the privacy parameters. 
Finally, we also prove that our method is within a constant factor of the lower bound, which establishes optimality; lower bounds are not considered in Cummings et al.~\cite{Cummings23}.

\section{Problem Definition and Main Result} \label{sec:PD}

\subsection{Problem Definition}
Heterogeneous Differential Privacy (HDP)  allows different users to have different possible privacy requirements. 
We begin by defining a natural notion of heterogeneous Differential Privacy. 
Similar definitions were also considered in \cite{Asu22,Alaggan17}. 
We denote positive real numbers by $\bbR_{> 0}$. 
As we consider univariate data-points in our datasets, we use boldface, such as $\vecx$ to denote a dataset, or equivalently, a vector. 
Capital boldface, such as $\vecX$, are used to denote a random dataset, i.e., a random vector. 
Vectors with subscript $i$, e.g. $\vecx_i$, refer to the $i$-th entry of the vector, while we use the notion $\vecx'^i$ for a vector differing from $\vecx$ at the $i$-th position.
The notation $[n]$ refers to the set $\{1,2,\ldots,n\}$.
The probability simplex in $n$-dimensions is represented by $\Delta_n$.
Throughout this work the $\ell^1$ norm, $\Lo{\cdot}$, is interchangeably used for sum of elements of vectors with positive components.
The notation $a \wedge b$ is used to denote $\min\{a,b\}$.
In this work, with some abuse of notation, we shall represent a \textit{sample} from a randomized algorithm $M$ mapping $\cX^n$ to a probability distribution on ${\cY}$ as $M(\vecx)$ where $\vecx \in \cX^n$.

\begin{DF}[Heterogeneous Differential Privacy] \label{def:epsDP}
A randomized algorithm $M: \cX^n \to \cY$ is said to be $\bm\epsilon$-DP for $\bm\epsilon \in \bbR_{> 0}^n$ if 
\begin{equation} \label{eq:DP-def}
    \pr{M(\vecx) \in S} \leq e^{\epsilon_i} \pr{M(\vecx'^i) \in S} \ \ \  \forall i \in [n],
\end{equation}
for all measurable sets $S \subseteq \cY$, where $\vecx,\vecx'^i \in \cX^n$ are any two `neighboring' datasets that differ arbitrarily in only the $i$-th component. 
\end{DF}

\begin{rem}
Note that the probability in the above definition is taken over the randomized algorithm conditioned on the given datasets $\vecx,\vecx'^i$, i.e., it is a conditional probability.
\end{rem}

Without loss of generality, we consider the case $\cX = [-0.5,0.5]$ and let $\cP$ denote the set of all distributions with support on $\cX$. 
Our results can be directly extended to distributions on any known finite length domain $\cX$.
Under this privacy setting, we investigate the problem of estimating the sample mean from the users' data.
There are $n$ users and each user's data point is sampled i.i.d.~from a distribution $P \in \cP$ over $\cX$ with mean denoted by $\mu_P \in [-0.5,0.5]$ henceforth. 
Each `data point' corresponds to a user's data in $\cX$, i.e., user $i$ has a datapoint (sample) $\vecx_i$ and the user has a privacy requirement  of $\epsilon_i$ (in the sense of Definition \ref{def:epsDP}).
We assume that the privacy constraint $\epsilon_i$ does not depend on the realization $\vecx_i$, and is itself not private.  %
Without loss of generality, we assume that the vector of privacy levels $\bm\epsilon$ is arranged in a non-decreasing order.

Let the set of all $\bm\epsilon$-DP algorithms from $\cX^n$ to $\cY = [-0.5,0.5]$ be denoted by $\cM_{\bm\epsilon}$. 
We consider the error metric as Mean-Squared Error (MSE) and are interested in characterizing the minimax estimation error, over all $\bm\epsilon$-DP algorithms. 
For an algorithm $M(\cdot) \in \cM_{\bm\epsilon}$, let $E(M)$ denote the worst-case error attained by it,
\begin{equation}
    E(M) := \max_{P \in \cP}\ \bbE_{\vecX\sim P^n,M(\cdot)}[(M(\vecX)-\mu_P)^2]. \label{eq:EM}
\end{equation}
In \eqref{eq:EM}, the expectation is taken over the randomness in the dataset $\vecX$ and the algorithm $M(\cdot)$.
Let $L(\bm\epsilon)$ denote the minimax estimation error given by
\begin{equation} \label{eq:minimax-def}
 L(\bm\epsilon ) \defn \min_{M \in \cM_{\bm\epsilon}}  E(M).
\end{equation}

Our goal is to characterize $L(\bm\epsilon)$ and provide an algorithm that achieves a MSE of the same order.

\subsection{Main Result}

\begin{algorithm}[!t]
\caption{Affine Differentially Private Mean (ADPM)}\
\label{alg:ADPM}
\begin{algorithmic}
\STATE \hspace{-0.2cm}{\textsc{ADPM}}$(\bm\epsilon,\vecx)$
\STATE $n \gets \textsc{length}(\bm\epsilon)$
\STATE $\bm\epsilon \gets $\textsc{sort}($\bm\epsilon$) \hspace{0.6cm}(ascending order)
\STATE $r_1 \gets \epsilon_1$
\STATE $L_1 \gets \epsilon_1$
\STATE $L_2 \gets \epsilon_1^2$
\STATE $k \gets 1$
 \WHILE{$k < n$}
     \STATE $r_{k+1} \gets \min\{\epsilon_{k+1} , \frac{L_2 + 8}{L_1}\}$ 
     \STATE $L_1 \gets L_1 + r_{k+1}$
     \STATE $L_2 \gets L_2 + r_{k+1}^2$
     \STATE $k \gets k + 1$
\ENDWHILE
\IF{$\frac{L_2 +8}{4L_1^2} > \frac{1}{4}$}
\STATE \textbf{return} $0$
\ELSE
\STATE $\vecw \gets  \vecr/L_1$
\STATE sample $N \sim $ Laplace($1/L_1$)
\STATE \textbf{return} $\inprod{\vecw}{\vecx} + N$
\ENDIF
\end{algorithmic}
\end{algorithm}

Our main result is that our proposed algorithm ADPM (Algorithm~\ref{alg:ADPM}) is minimax optimal, as stated  in Theorem~\ref{thm:main} below.
Note that Theorem~\ref{thm:main} is instance optimal, in the sense that it establishes  minimax optimality for each $n$ and $(\epsilon_1,\ldots,\epsilon_n)$.

ADPM has a near-linear time and linear space complexity.  More precisely, the initial sorting of $\bm\epsilon$ requires $O(n \log n)$ time and $O(n)$ space, 
computing the weights and the inner product $\vecw$ requires $O(n)$ time and space each.
Note that for non-private mean estimation, we still require $O(n)$ time to compute the mean.
For a special setting, if we know that there are $n$ users with any user's privacy parameter taking values in a known discrete set of values $\cE$ of size $|\cE| = k$, ($k < n$) then computing the weights can be done in $O(k \log k )$ time instead.

\begin{thm} \label{thm:main}
    ADPM described in Algorithm~\ref{alg:ADPM} achieves worst-case error within a (universal) constant factor of $L(\bm\epsilon)$, for all $n$ and privacy constraints $\bm\epsilon = (\epsilon_1,\ldots,\epsilon_n)$.
\end{thm}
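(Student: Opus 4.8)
The plan is to prove Theorem~\ref{thm:main} by exhibiting matching upper and lower bounds on $L(\boldsymbol\epsilon)$ that agree up to a universal constant, splitting the work into three parts: an analysis of the affine estimator (upper bound), a two-point information-theoretic argument (lower bound), and an analytic reconciliation of the two bounds. The common quantity both bounds should converge to is $\tfrac{L_2+8}{4L_1^2}\wedge \tfrac14$, with $L_1=\sum_i r_i$, $L_2=\sum_i r_i^2$ the values produced by the greedy sweep.

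\textbf{Upper bound.} First I would verify that ADPM lies in $\cM_{\boldsymbol\epsilon}$. Since $r_i\le\epsilon_i$ by construction and the weights $w_i=r_i/L_1\ge 0$ satisfy $\sum_i w_i=1$, changing $\vecx_i$ alters the pre-noise statistic $\inprod{\vecw}{\vecx}$ by at most $w_i$ (the range of $\cX$ has length $1$); the $\mathrm{Laplace}(1/L_1)$ noise then yields per-coordinate privacy $w_i L_1=r_i\le\epsilon_i$, so \eqref{eq:DP-def} holds (the $0$ branch is trivially private). Because $\sum_i w_i=1$ the estimator is unbiased, so its MSE is pure variance: $\tfrac14\sum_i w_i^2+\operatorname{Var}(N)=\tfrac{L_2/4+2}{L_1^2}=\tfrac{L_2+8}{4L_1^2}$, using $\operatorname{Var}(\vecx_i)\le\tfrac14$ and $\operatorname{Var}(N)=2/L_1^2$; the trivial estimator $0$ has worst-case MSE $\tfrac14$, explaining the returned minimum. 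Finally I would invoke the optimization structure (Lemma~\ref{lem:domain}): minimizing $\phi(\vecr)=\tfrac{\|\vecr\|_2^2+8}{\|\vecr\|_1^2}$ over the box $0\le r_i\le\epsilon_i$ has KKT solution $r_i=\min(\epsilon_i,\tau)$ with fixed point $\tau=(L_2+8)/L_1$, which is exactly what the greedy loop computes, giving $E(\text{ADPM})\le \tfrac{L_2+8}{4L_1^2}\wedge\tfrac14$.

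\textbf{Lower bound.} Here I would use a Le Cam reduction: choose $P_\pm\in\cP$ supported on $\{-\tfrac12,\tfrac12\}$ with means differing by $\delta$, so that $L(\boldsymbol\epsilon)\gtrsim \delta^2\bigl(1-\TV(M(\vecX_+),M(\vecX_-))\bigr)$ for every $M\in\cM_{\boldsymbol\epsilon}$. The total variation must be controlled by two complementary mechanisms, with the smaller retained: data processing gives the purely statistical bound $\TV\le\TV(P_+^n,P_-^n)\lesssim\delta\sqrt n$, while coupling $\vecX_+,\vecX_-$ so that they disagree on a random set $J$ (each coordinate differing with probability $\approx\delta$), combined with heterogeneous group privacy, gives the privacy-limited bound $\TV\le\mathbb E_J\bigl[1-e^{-\sum_{i\in J}\epsilon_i}\bigr]$. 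Optimizing over $\delta$ and refining the coupling estimate so that each user contributes only $\min(\epsilon_i,\tau)$-worth of distinguishing power then yields a lower bound of the same shape as $\tfrac{L_2+8}{L_1^2}\wedge 1$.

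\textbf{Reconciliation and main obstacle.} The lower bound emerging above is naturally a supremum over constructions (the choice of $\delta$ and of how the coupling charges each coordinate), so the last step is an analytic argument that this supremum is within a constant factor of the clean saturated expression $\tfrac{L_2+8}{4L_1^2}\wedge\tfrac14$; this matching of two different optimization problems is the step of independent interest. I expect the main obstacle to be the lower bound, and specifically making the group-privacy estimate saturate correctly: the naive factor $e^{-\sum_{i\in J}\epsilon_i}$ charges every user its full $\epsilon_i$, whereas the upper bound caps contributions at $\tau$, so the delicate point is showing that statistical indistinguishability simultaneously limits the effective per-user budget to $\min(\epsilon_i,\tau)$ and that the resulting bound then reconciles, up to a universal constant, with the affine-estimator error.
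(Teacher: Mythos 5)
Your overall architecture (affine estimator for the upper bound, Le Cam for the lower bound, then a constant-factor reconciliation) is exactly the paper's, and your upper-bound half is essentially right: the privacy accounting, the variance computation $\frac{L_2+8}{4L_1^2}$, and the water-filling characterization $r_i=\min(\epsilon_i,\tau)$ with $\tau=(L_2+8)/L_1$ a fixed point all match what ADPM computes. (The paper proves global optimality not by KKT on a box but by restricting to a cone $\cD(\bm\epsilon)$ with $x_1=\epsilon_1$, showing $(\Lt{\vecx}^2+8)/\Lo{\vecx}^2$ is strictly quasi-convex there, and exhibiting the recursion as a local minimizer; since the objective is not convex, some such argument is needed to promote your first-order condition to a global minimum.)

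The genuine gap is in the lower bound. You propose to retain the smaller of two \emph{global} bounds on the output total variation: a statistical bound $\lesssim\delta\sqrt{n}$ and a coupling/group-privacy bound $\mathbb{E}_J\bigl[1-e^{-\sum_{i\in J}\epsilon_i}\bigr]$. Neither, nor their minimum, suffices. Take $m$ users with $\epsilon_i\to\infty$ and $n-m$ with $\epsilon_i\to 0$: the true rate is $\Theta(1/m)$, but the statistical bound forces $\delta\lesssim 1/\sqrt{n}$ (yielding $1/n$) and the group-privacy bound collapses to $1$ as soon as $J$ contains a public coordinate, which happens with probability $\approx\delta m$, forcing $\delta\lesssim 1/m$ (yielding $1/m^2$). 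What is needed, and what the paper's Lemma~\ref{lem:TV-bound} supplies, is a \emph{hybrid} bound valid for every split index $k$: insert the intermediate law under $P_1^kP_2^{n-k}$, control the last $n-k$ coordinates by data processing plus Pinsker, and control the first $k$ by a telescoping one-coordinate-at-a-time argument using $|\pr{M(\vecx)\in S}-\pr{M(\vecx'^i)\in S}|\le 1-e^{-\epsilon_i}$ (Lemma~\ref{lem:DP-imply}), which stays bounded for large $\epsilon_i$; maximizing over $k$ gives $L(\bm\epsilon)\gtrsim H(\bm\epsilon)=\max_k\bigl(\Lo{\eps{k}}^2+n-k\bigr)^{-1}\wedge\frac14$. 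Note that no $\min(\epsilon_i,\tau)$ capping appears in the lower bound, so your hope that the coupling can be ``refined'' to directly reproduce $(L_2+8)/L_1^2$ is not how the argument closes. Instead, the reconciliation $f(\rn{n})\le 443\,H(\eps{n})$ is a separate, substantial algebraic argument (Section~\ref{sec:OPT}): it is first established for unsaturated prefixes by locating the index $p$ maximizing $H$ and comparing the two expressions piece by piece, and then propagated through saturation via $f(\rn{k+1})=f(\rn{k})/(1+4f(\rn{k}))$ against $H(\eps{k+1})\ge H(\eps{k})/(1+H(\eps{k}))$. You correctly flag this reconciliation as the main obstacle, but your sketch supplies no mechanism for either the interpolated lower bound or the comparison.
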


In Theorem~\ref{thm:UB} (Section~\ref{sec:UB}), we prove an upper bound on the MSE for ADPM, which is an upper bound on $L(\bm\epsilon)$. 
A lower bound on $L(\bm\epsilon)$ is shown in Theorem~\ref{thm:lb} (Section~\ref{sec:LB}).
Due to the  different form of the lower and upper bounds, it is non-trivial to compare them. 
In Theorem~\ref{thm:opt} (Section~\ref{sec:OPT}), it is shown that the lower and upper bound are within a constant factor of each other, proving the minimax optimality of ADPM and thus, proving Theorem~\ref{thm:main}.

We shall switch to working with variable length vectors at times so we define some new notation.
Consider any sequence of non-decreasing privacy values $\{\epsilon_i\}_{i=1}^{n}$.
The notation $\ep{j}{i}$ refers to the vector $(\epsilon_i,\ldots,\epsilon_j)$ for $j \geq i$.

We now describe an important phenomenon in ADPM, that, by minimax optimality of ADPM implied by Theorem~\ref{thm:main}, is fundamental to the problem.
Assume $\bm\epsilon$ to be in non-decreasing order, i.e., the users with stricter privacy requirements are arranged earlier in the vector $\bm\epsilon$.
Let $k$ be the minimum index at which $\epsilon_{k+1} \geq \frac{\Lt{\eps{k}}^2+8}{\Lo{\eps{k}}}$.
Note that such a $k$ need not exist if all the privacy levels are sufficiently close to each other.
When such a $k$ exists, ADPM algorithm provides a privacy level exactly  $\epsilon_i$ to user $i$ for $i \in [k]$ and for the rest of the users,
it provides a common privacy level of $\frac{\Lt{\eps{k}}^2+8}{\Lo{\eps{k}}}$.
Thus, for these latter users with lower privacy requirements, ADPM offers extra privacy without sacrificing on  MSE.
It is interesting to note that among the latter users, regardless of their relative privacy requirements, they all receive the same level of privacy.
Thus, even if there are some users who are do not care about privacy ($\epsilon_i \to \infty$), it is still optimal to give them some privacy.

As an example, if out of the $n \ (> 10^3)$ users, there are $10^3$ users wanting a privacy requirement of $\epsilon = 0.1$ and the rest of the users have privacy requirements ranging from $\epsilon = 0.5$ to $\epsilon \to \infty$, then all the users in the latter category receive a privacy guarantee of $\epsilon = 0.18$ by ADPM.
\section{Upper Bound} \label{sec:UB}

The main result of this section, an upper bound on the performance of ADPM (and therefore $L(\bm\epsilon)$) is stated in  Theorem~\ref{thm:UB}. 
Subsequently, we motivate ADPM, observe some properties and then prove the theorem at the end of this section.

\begin{thm}[Upper Bound] \label{thm:UB}
Let $r_1 = \epsilon_1$ and 
\begin{equation}
    r_{k+1} = \min\left\{\epsilon_{k+1}, \frac{\Lt{\rn{k}}^2 + 8}{\Lo{\rn{k}}} \right\}\ \forall \ k \in [n-1]. \label{eq:thm2-r-def}
\end{equation}  
Then, ADPM defined in Algorithm~\ref{alg:ADPM}, has a worst case MSE of $\frac{\Lt{\rn{n}}^2 + 8}{4\Lo{\rn{n}}^2} \wedge \frac{1}{4}$, and thus,
\begin{equation}
   L(\bm\epsilon) \leq \frac{\Lt{\rn{n}}^2 + 8}{4\Lo{\rn{n}}^2} \wedge \frac{1}{4}.
\end{equation}  
\end{thm}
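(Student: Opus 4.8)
The plan is to verify two properties of ADPM separately: that it belongs to $\cM_{\bm\epsilon}$ (privacy), and that its worst-case error equals the stated quantity (accuracy). I would begin with the key observation that the weights $\vecw$, the scale $1/\Lo{\rn{n}}$, and the choice of branch are all computed from the public vector $\bm\epsilon$ alone (via the recursion \eqref{eq:thm2-r-def}), and never depend on the data $\vecx$. Consequently the constant branch returning $0$ is trivially $\bm\epsilon$-DP, and for the affine branch we may treat $\vecw$ as fixed when checking Definition~\ref{def:epsDP}.

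For the affine branch $M(\vecx) = \inprod{\vecw}{\vecx} + N$ with $N \sim \text{Laplace}(1/\Lo{\rn{n}})$, I would note that each $r_i > 0$ and hence $w_i = r_i/\Lo{\rn{n}} > 0$. Changing only the $i$-th coordinate of the input shifts the center of the Laplace law by $w_i \delta$ with $|\delta| \le 1$, the diameter of $\cX = [-0.5,0.5]$. The standard Laplace-mechanism likelihood-ratio bound then yields a per-user guarantee of $w_i \big/ (1/\Lo{\rn{n}}) = r_i$. Since $r_1 = \epsilon_1$ and each subsequent $r_{k+1}$ in \eqref{eq:thm2-r-def} is a minimum with $\epsilon_{k+1}$, we have $r_i \le \epsilon_i$ for all $i$, so user $i$ receives $\epsilon_i$-DP and ADPM $\in \cM_{\bm\epsilon}$.

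For the error, the crucial structural fact is $\Lo{\vecw} = \sum_i r_i/\Lo{\rn{n}} = 1$, which makes the affine estimator unbiased: writing $M(\vecX) - \mu_P = \sum_i w_i(\vecX_i - \mu_P) + (\Lo{\vecw}-1)\mu_P + N$, the middle term vanishes for every $P$. By independence of the i.i.d.\ samples and the noise, the MSE is then a pure variance, $\sigma_P^2 \sum_i w_i^2 + \var{N}$, where $\sigma_P^2 = \var{\vecX_i}$. I would substitute $\sum_i w_i^2 = \Lt{\rn{n}}^2/\Lo{\rn{n}}^2$ and $\var{N} = 2/\Lo{\rn{n}}^2$, and then maximize over $P \in \cP$. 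Because the bias is identically zero, this maximization reduces to maximizing $\sigma_P^2$ over distributions supported on $[-0.5,0.5]$, whose maximum is $1/4$ (attained by the two-point distribution on the endpoints). This gives worst-case MSE $\frac{\Lt{\rn{n}}^2 + 8}{4\Lo{\rn{n}}^2}$ for the affine branch.

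Finally, I would reconcile the branch selection. ADPM returns $0$ exactly when $\frac{\Lt{\rn{n}}^2 + 8}{4\Lo{\rn{n}}^2} > \frac{1}{4}$, in which case the worst-case error is $\max_{P}\mu_P^2 = 1/4$; otherwise it returns the affine estimate with the error above. Combining the two cases yields worst-case MSE $\frac{\Lt{\rn{n}}^2 + 8}{4\Lo{\rn{n}}^2} \wedge \frac{1}{4}$, and since ADPM $\in \cM_{\bm\epsilon}$ this is an upper bound on $L(\bm\epsilon)$. The only genuinely delicate step is the privacy argument: one must track the Laplace scale convention so that sensitivity $w_i$ and scale $1/\Lo{\rn{n}}$ combine to exactly $r_i$, and confirm $r_i \le \epsilon_i$ from \eqref{eq:thm2-r-def}. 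The accuracy claim is then a routine bias--variance decomposition together with the extremal-variance fact for bounded random variables.
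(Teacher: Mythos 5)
Your proof is correct, but it takes a more elementary route than the paper's. The paper proves Theorem~\ref{thm:UB} by first setting up the optimization problem \eqref{eq:pd-opt} over affine estimators, showing via Lemma~\ref{lem:quasi} (strict quasi-convexity of $K(\vecx)$ on $\cD(\bm\epsilon)$) and Lemma~\ref{lem:local-min} (local minimality of $\rn{n}$) that the recursively defined $\rn{n}$ is the \emph{global} minimizer of the MSE upper bound over all simplex weights, and then reading off the value $f(\rn{n})$; the privacy and bias--variance computations you perform appear there only as motivation (the discussion around \eqref{eq:w-con}--\eqref{eq:pd-val} and Lemma~\ref{lem:eps-af}). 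You instead take the weights $\rn{n}/\Lo{\rn{n}}$ as given by the recursion and directly verify (i) membership in $\cM_{\bm\epsilon}$ via the Laplace mechanism with sensitivity $w_i$ and scale $1/\Lo{\rn{n}}$, using $r_i \le \epsilon_i$ from \eqref{eq:thm2-r-def}, and (ii) the exact worst-case MSE via unbiasedness, $\var{N} = 2/\Lo{\rn{n}}^2$, and the extremal variance $1/4$ on $[-0.5,0.5]$, with correct handling of the branch that outputs $0$. This suffices for the literal statement of the theorem, since an upper bound on $L(\bm\epsilon)$ only requires exhibiting one $\bm\epsilon$-DP algorithm with the claimed error; what your argument does not deliver is the paper's stronger structural conclusion that $\rn{n}/\Lo{\rn{n}}$ is the \emph{optimal} choice of affine weights, which the paper uses to interpret the saturation phenomenon and to claim ADPM is the best affine estimator, though that optimality is not logically needed for Theorem~\ref{thm:UB} itself nor for the constant-factor matching in Theorem~\ref{thm:opt}.
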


The upper-bound, given by the MSE of ADPM, is centered around finding the optimal affine estimator $\inprod{\vecw}{\vecx} + N$ of the mean, where $\vecw \in \Delta_n$ and $N$ is some suitable noise to satisfy the $\bm\epsilon$-DP constraint.
In particular, we prove that $\vecw = \vecr/\Lo{\vecr}$, for the vector $\vecr$ recursively defined in \eqref{eq:thm2-r-def}, is a global-minimizer of the MSE.

\subsection{ADPM Motivation}

Let us constrain ourselves to affine estimators of form $M(\vecx) = \inprod{\vecw}{\vecx} + N$, where a zero-mean random noise $N$ is chosen appropriately to satisfy the privacy constraint and  $\vecw \in \Delta_n$ to make the estimator unbiased.
Recall that Laplace$(\eta)$ distribution has pdf   $\frac{1}{2\eta}e^{-|\cdot|/\eta}$.
If the noise is distributed as Laplace$(\eta)$, then it can be shown that the estimator $\inprod{\vecw}{\vecx} + N$, is $(\vecw/\eta)$-DP (see Lemma~\ref{lem:eps-af} in Appendix~\ref{ap:A}).
Thus, from the privacy constraint, we impose the condition
\begin{equation} \label{eq:w-con}
    w_i \leq \eta \epsilon_i \ \forall i.
\end{equation}
The variance (or MSE) of the estimator under distribution $P$ is given by $\var{P}\Lnorm{w}{2}^2 + 2\eta^2 \leq \Lnorm{w}{2}^2/4 + 2\eta^2$ ($1/4$ is the worst case variance for distributions on $\cP$).
To minimize this, subject to \eqref{eq:w-con}, we set $\eta = \max_i w_i/\epsilon_i = \Li{\frac{\vecw}{\bm\epsilon}}$, where the latter notation is element-wise division.
Therefore, we have
\begin{equation}
    \text{MSE} \leq \frac{\Lt{\vecw}^2}{4} + 2\left\|\frac{\vecw}{\bm\epsilon} \right\|^2_{\infty}. \label{eq:pd-val}
\end{equation}
Thus, finding the optimal affine estimator requires us to solve the minimization problem
\begin{equation} \label{eq:pd-opt}
    \vecw^* = \argmin_{\vecw \in \Delta_n} \frac{\Lt{\vecw}^2}{4} + 2\left\|\frac{\vecw}{\bm\epsilon} \right\|^2_{\infty}.
\end{equation}

Although \eqref{eq:pd-opt} can be solved by any modern convex optimization solver, we need to solve it analytically to get an upper bound on the MSE for showing minimax optimality.
In the next subsection, we show how to solve this optimization problem in a recursive manner and ADPM uses this solution.

\begin{rem}[Sub-Optimality of Proportional Weighing] \label{rem:prop}
Note that \eqref{eq:w-con} can be satisfied by taking $\vecw \propto \bm\epsilon$, i.e., $\vecw = \bm\epsilon/\Lo{\bm\epsilon}$ and $\eta = 1/\Lo{\bm\epsilon}$.
This corresponds to providing users with higher privacy a lower weightage in the estimator. 
While intuitive, this is not optimal.
Consider the case where there are a total of $1000$ users out of which $999$ demand a privacy requirement $\epsilon = 0.1$ and one user has no privacy requirement ($\epsilon \to \infty$).
In this case, the above estimator just considers a single data point and has worst case MSE of $\frac{1}{4}$.
One can do better by using the weights $\vecw^*$ in \eqref{eq:pd-opt}, which would give a worst case MSE of the order $10^{-4}$.
\end{rem}

\subsection{Solving the Minimization Problem}

We use the function $f(\vecx,\bm\epsilon)$\footnote{$\vecx$ from here on doesn't refer to the dataset.} to denote the upper bound on the MSE when using the weights $\vecx \in \mathbb{R}_{\geq 0}^n$ for privacy requirement $\bm\epsilon$ (see \eqref{eq:pd-val}).
For convenience, we do not restrict the weights $\vecx$ to be on the simplex and instead, we scale it to the simplex by considering $\vecw = \frac{\vecx}{\Lo{\vecx}}$ to be the actual weights, i.e.,
\begin{equation}
    f(\vecx,\bm\epsilon) = \frac{\Lt{\vecx}^2}{4\Lo{\vecx}^2} + 2\frac{\left\|\vecx/\bm\epsilon\right\|^2_{\infty}}{\Lo{\vecx}^2}.
\end{equation}

The reason we do not restrict $\vecx$ to the simplex is that it is easier to work with linearly-scaled domain instead of the simplex due to a pattern that emerges in the solution that we describe in this subsection.

We remind the readers that $\bm\epsilon$ is arranged in a non-decreasing order, i.e., $\epsilon_i \leq \epsilon_j,\ \forall i < j$.
With this in mind, Lemma~\ref{lem:domain} shows an important property of the optimal weights $\vecw^*$ (in the simplex) for the minimization problem under consideration.

\begin{lemma} \label{lem:domain}
    Consider a fixed length non-decreasing privacy constraint vector $\bm\epsilon$.
    Let $\vecw^* = \argmin_{\vecw \in \Delta_n} f(\vecw,\bm\epsilon)$, then 
    \begin{equation}
        \frac{w^*_i}{\epsilon_i} \geq \frac{w^*_j}{\epsilon_j} \quad \forall i < j.
    \end{equation}
\end{lemma}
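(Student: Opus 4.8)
The plan is to argue by contradiction using a feasibility-preserving mass transfer between two coordinates that violate the claimed ordering. First, note that $\frac{1}{4}\Lt{\vecw}^2$ is strictly convex and the $\ell^\infty$ term is convex, so $f(\cdot,\bm\epsilon)$ is strictly convex on $\Delta_n$ and its minimizer $\vecw^*$ is unique; it therefore suffices to show that any point violating the ordering cannot be optimal. Suppose, for contradiction, that $w_i^*/\epsilon_i < w_j^*/\epsilon_j$ for some $i < j$. Since $\bm\epsilon$ is non-decreasing, $\epsilon_i \le \epsilon_j$, and combining this with the assumed strict inequality forces $w_i^* < w_j^*$ (indeed $w_i^*\epsilon_i \le w_i^*\epsilon_j < w_j^*\epsilon_i$, then divide by $\epsilon_i > 0$); in particular $w_j^* > 0$.

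Next I would perturb $\vecw^*$ by moving a small amount of mass from coordinate $j$ to coordinate $i$: let $\tilde{\vecw}$ agree with $\vecw^*$ except that $\tilde w_i = w_i^* + \delta$ and $\tilde w_j = w_j^* - \delta$, with $\delta > 0$ small. For $\delta \le w_j^*$ this stays in $\Delta_n$, since the $\ell^1$ norm is preserved and all entries remain non-negative. The effect on the first term is favorable: a direct expansion gives $\Lt{\tilde{\vecw}}^2 - \Lt{\vecw^*}^2 = 2\delta(w_i^* - w_j^*) + 2\delta^2$, and because $w_i^* < w_j^*$ the linear term is strictly negative, so this term strictly decreases for all sufficiently small $\delta$.

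The main obstacle is the non-smooth $\ell^\infty$ term, and the crux is to show the transfer does not increase it. Write $M = \Li{\vecw^*/\bm\epsilon}$. The perturbation changes only the ratios at $i$ and $j$: the ratio at $j$ decreases from $w_j^*/\epsilon_j \le M$, while the ratio at $i$ increases from $w_i^*/\epsilon_i$, which by assumption lies strictly below $M$ (as $w_i^*/\epsilon_i < w_j^*/\epsilon_j \le M$). Hence there is a strictly positive gap, and for every $\delta < \epsilon_i M - w_i^*$ we still have $(w_i^* + \delta)/\epsilon_i < M$; all other ratios are unchanged. Therefore $\Li{\tilde{\vecw}/\bm\epsilon} \le M$, so the $\ell^\infty$ term does not increase.

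Choosing $\delta$ small enough to meet all of the above constraints simultaneously, the first term strictly decreases while the second does not increase, so $f(\tilde{\vecw},\bm\epsilon) < f(\vecw^*,\bm\epsilon)$, contradicting optimality of $\vecw^*$. As a consistency check, this ordering is precisely what the water-filling structure of the optimum predicts: the stringent (small-$\epsilon$, early-index) users sit at the active constraint with common ratio $\eta = \Li{\vecw^*/\bm\epsilon}$, while the relaxed users share a common weight $c$ whose ratio $c/\epsilon_i$ is non-increasing in $i$. One could alternatively read the lemma off the KKT conditions of the smooth reformulation that introduces $\eta$ as an explicit variable, but the perturbation argument is the most direct route and isolates the only delicate point, namely the behavior of the $\ell^\infty$ term.
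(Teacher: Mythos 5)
Your proof is correct and follows essentially the same route as the paper's: a small mass transfer from coordinate $j$ to coordinate $i$ that strictly decreases the $\ell^2$ term (since $w_i^* < w_j^*$) while keeping every ratio $\tilde w_k/\epsilon_k$ below the original maximum, so the $\ell^\infty$ term does not increase. The only difference is cosmetic --- the paper parametrizes the transfer multiplicatively via $K = w_j/w_i$ and $\tilde w_i = w_i(1+\delta)$, $\tilde w_j = w_i(K-\delta)$, whereas your additive version has the minor advantage of not dividing by $w_i^*$.
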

\begin{proof}
We show that for any $\vecw \in \Delta_n$ such that 
$\exists i < j$, $\frac{w_i}{\epsilon_i} < \frac{w_j}{\epsilon_j}$,
     it is possible to find a $\tilde\vecw \in \Delta_n$ such that $f(\tilde\vecw,\bm\epsilon) < f(\vecw,\bm\epsilon)$.
    Let $\frac{w_j}{w_i} = K$, and $\lambda = \frac{\epsilon_j}{\epsilon_i}$.
    Hence, $K > \lambda \geq 1$.
    Consider $\tilde \vecw$ that is equal to $\vecw$ except $\tilde w_j = w_i(K - \delta)$ and  $\tilde w_i = w_i(1 + \delta)$.
    Thus, $\Lo{\tilde \vecw} = \Lo{\vecw}$
    Choosing any $0 < \delta \leq \frac{K - \lambda}{1+ \lambda}$ implies $\frac{w_j^*}{\epsilon_j} > \frac{\tilde w_j}{\epsilon_j} \geq \frac{\tilde w_i}{\epsilon_i} > \frac{w_i}{\epsilon_i}$.
    Thus, $\Li{\tilde\vecw/\bm\epsilon} \leq \Li{\vecw/\bm\epsilon}$.

    Next, $(w_j^2 + w_i^2) - (\tilde w_j^2 + \tilde w_i^2) = 2\delta w_i^2 (K-1-\delta)$. Thus, choosing $0 < \delta < K-1$, we have $\Lt{\vecw}^2 > \Lt{\tilde \vecw}^2$.

    Thus, overall, any $0 < \delta < \frac{K - \lambda}{1+ \lambda}$ would result in $f(\tilde\vecw,\bm\epsilon) < f(\vecw,\bm\epsilon)$.
\end{proof}

We shall instead minimize (globally) $f(\vecx,\bm\epsilon)$ over $\vecx$ with the constraint $x_1 = \epsilon_1$.  
Since the optimal weights 
$$\vecr = \argmin_{\vecx \in \mathbb{R}_{\geq 0}^n: x_1 = \epsilon_1} f(\vecx,\bm\epsilon)$$ 
is given by the linear-scaling\footnote{Lemma~\ref{lem:domain} also implies that $w_1^*$ is strictly greater than $0$} $r_i = \left( \frac{\epsilon_1}{w_1^*}\right) w_i^*$, it satisfies Lemma~\ref{lem:domain} as well, i.e.,
$$ \frac{r_i}{\epsilon_i} \geq \frac{r_j}{\epsilon_j} \ \forall i \leq j.$$
This allows us to constrain our search to the global optimizer to a smaller region. 
Thus, it is sufficient to perform a constrained minimization over the domain 
\begin{equation}
    \cD(\bm\epsilon) = \left\{\vecx \in \bbR^n_{\geq 0}: x_1 = \epsilon_1,\  \frac{x_i}{\epsilon_i} \geq \frac{x_j}{\epsilon_j} \ \forall j>i \right\}. 
\end{equation}
Note that this is a closed convex domain. 
Further, constraining to this domain allows us to write $\Li{\frac{\vecx}{\bm\epsilon}} = 1 \ \forall \ \vecx \in \cD(\bm\epsilon)$.
Thus, this observation and Lemma~\ref{lem:domain} allows us to get Corollary~\ref{cor:opt}.

\begin{cor} \label{cor:opt}
We note that
\begin{equation}
\vecr = \argmin_{\vecx \in \cD(\bm\epsilon)} f(\vecx,\bm\epsilon) =  \argmin_{\vecx \in \cD(\bm\epsilon)} \frac{\Lt{\vecx}^2 + 8}{4\Lo{\vecx}^2}, \label{eq:rn-defn}    
\end{equation}
and
\begin{equation}
     \argmin_{\vecw \in \Delta_n} f(\vecw,\bm\epsilon) = \frac{\vecr}{\Lo{\vecr}}.
\end{equation}
\end{cor}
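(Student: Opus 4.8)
The plan is to derive the corollary from two elementary observations about $f$, together with Lemma~\ref{lem:domain}. The first observation is that $f(\cdot,\bm\epsilon)$ is \emph{scale-invariant}: for any scalar $c>0$ and $\vecx\in\bbR^n_{\geq 0}\setminus\{0\}$ one has $f(c\vecx,\bm\epsilon)=f(\vecx,\bm\epsilon)$, since each of $\Lt{\vecx}^2$, $\Li{\vecx/\bm\epsilon}^2$, and $\Lo{\vecx}^2$ is homogeneous of degree two and the last sits in every denominator. The second observation is that on $\cD(\bm\epsilon)$ the $\ell^\infty$ term simplifies: for $\vecx\in\cD(\bm\epsilon)$ the ratios $x_i/\epsilon_i$ are non-negative and non-increasing in $i$, so $\Li{\vecx/\bm\epsilon}=\max_i x_i/\epsilon_i = x_1/\epsilon_1 = 1$. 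Consequently $f(\vecx,\bm\epsilon)=\frac{\Lt{\vecx}^2+8}{4\Lo{\vecx}^2}$ for every $\vecx\in\cD(\bm\epsilon)$, which immediately yields the second equality in \eqref{eq:rn-defn}: the two objectives agree pointwise on $\cD(\bm\epsilon)$, hence share the same minimizer over that set.

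It then remains to identify this minimizer as $\vecr$ and to establish the simplex statement. First I would note that on $\Delta_n$ one has $\Lo{\vecw}=1$, so $f(\vecw,\bm\epsilon)=\tfrac14\Lt{\vecw}^2 + 2\Li{\vecw/\bm\epsilon}^2$, which is strictly convex (the quadratic term is strictly convex and $\vecw\mapsto\Li{\vecw/\bm\epsilon}^2$ is convex, being the square of a weighted sup-norm). Hence $\vecw^*=\argmin_{\vecw\in\Delta_n}f(\vecw,\bm\epsilon)$ is unique, and Lemma~\ref{lem:domain} supplies $w_1^*>0$ together with the monotonicity $w_i^*/\epsilon_i\geq w_j^*/\epsilon_j$ for $i<j$. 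Defining $\vecr$ by the linear scaling $r_i=(\epsilon_1/w_1^*)\,w_i^*$, we get $r_1=\epsilon_1$, the monotonicity is preserved, so $\vecr\in\cD(\bm\epsilon)$, and scale-invariance gives $f(\vecr,\bm\epsilon)=f(\vecw^*,\bm\epsilon)$.

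Finally I would close the loop between the three domains using scale-invariance. Since $\cD(\bm\epsilon)\subseteq\bbR^n_{\geq 0}\setminus\{0\}$ and every nonzero $\vecx$ satisfies $f(\vecx,\bm\epsilon)=f(\vecx/\Lo{\vecx},\bm\epsilon)\geq f(\vecw^*,\bm\epsilon)$ with $\vecx/\Lo{\vecx}\in\Delta_n$, we obtain $\min_{\vecx\in\cD(\bm\epsilon)}f(\vecx,\bm\epsilon)\geq f(\vecw^*,\bm\epsilon)=f(\vecr,\bm\epsilon)$; as $\vecr\in\cD(\bm\epsilon)$, the infimum is attained at $\vecr$, and uniqueness over $\cD(\bm\epsilon)$ follows by transporting any minimizer back to the simplex via its normalization and invoking uniqueness of $\vecw^*$. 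This proves $\vecr=\argmin_{\vecx\in\cD(\bm\epsilon)}f(\vecx,\bm\epsilon)$. For the simplex claim, $\Lo{\vecr}=(\epsilon_1/w_1^*)\Lo{\vecw^*}=\epsilon_1/w_1^*$, so $\vecr/\Lo{\vecr}=\vecw^*=\argmin_{\vecw\in\Delta_n}f(\vecw,\bm\epsilon)$. The step demanding the most care is precisely this bookkeeping across the orthant, the cone $\cD(\bm\epsilon)$, and the simplex: scale-invariance transports the unique simplex minimizer to $\cD(\bm\epsilon)$, while Lemma~\ref{lem:domain} is what guarantees that the scaled point genuinely lands in $\cD(\bm\epsilon)$ rather than merely in the orthant.
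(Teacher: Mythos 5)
Your proposal is correct and follows essentially the same route as the paper: Lemma~\ref{lem:domain} places the (scaled) simplex minimizer inside $\cD(\bm\epsilon)$, the constraint $x_1=\epsilon_1$ together with the monotonicity of $x_i/\epsilon_i$ forces $\Li{\vecx/\bm\epsilon}=1$ on $\cD(\bm\epsilon)$, and scale-invariance of $f$ transports the minimizer between the simplex and $\cD(\bm\epsilon)$. The only addition is your strict-convexity argument for uniqueness of $\vecw^*$ on $\Delta_n$, which is a valid (and slightly more self-contained) substitute for the uniqueness the paper later extracts from the strict quasi-convexity in Lemma~\ref{lem:quasi}.
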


In Lemma~\ref{lem:quasi}, we show that the function $\frac{\Lt{\vecx}^2 + 8}{4\Lo{\vecx}^2}$ is strictly quasi-convex, which means that a local minimizer in \eqref{eq:rn-defn} is  the unique global minimizer.

\begin{lemma} \label{lem:quasi}
    The function 
    \begin{equation}
    K(\vecx) = \frac{\Lt{\vecx}^2+8}{\Lo{\vecx}^2}    
    \end{equation}
    is strictly quasi-convex (domain restricted to the non-negative reals and  $\Lo{\vecx} > 0$).
\end{lemma}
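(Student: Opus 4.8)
The plan is to establish strict quasi-convexity by the standard device of restricting $K$ to an arbitrary line segment of the domain and showing the resulting one-dimensional function is either strictly monotone or strictly unimodal with a unique minimizer; recall that on a convex set, (strict) quasi-convexity is equivalent to (strict) quasi-convexity of every such restriction. Fix two points $\vecu,\vecv \in \bbR_{\ge 0}^n$ with positive $\ell^1$-norm, put $\vecd = \vecv - \vecu$, and parametrize $\vecx(t) = \vecu + t\vecd$ over the feasibility interval $[t_-,t_+] = \{t : \vecx(t) \ge 0\}$. Because every coordinate of $\vecx(t)$ is nonnegative on this interval, $\Lo{\vecx(t)} = \sum_i x_i(t)$ is \emph{affine}, say $a+bt$ with $a = \Lo{\vecu} > 0$, while $\Lt{\vecx(t)}^2$ is a quadratic $A + Bt + Ct^2$ with $C = \Lt{\vecd}^2 \ge 0$. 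Hence on $[t_-,t_+]$ the map $t \mapsto K(\vecx(t))$ agrees with the rational function $R(t) = \frac{Ct^2 + Bt + (A+8)}{(a+bt)^2}$.

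The crux is a derivative computation that simplifies dramatically. Differentiating and pulling out a common factor shows that the numerator of $R'$ factors as $(a+bt)\,\ell(t)$ with $\ell$ affine, so that $R'(t) = \frac{\ell(t)}{(a+bt)^3}$. On the feasibility interval $a+bt = \Lo{\vecx(t)} > 0$, so $R'(t)$ has the same sign as $\ell(t)$; since $\ell$ is affine, $R'$ changes sign at most once and $R$ has at most one interior critical point.

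The step I expect to be the main obstacle is ruling out that this unique critical point is a \emph{maximum}, which would make $R$ quasi-concave rather than quasi-convex. I resolve this through the boundary behavior of $R$. The cases $b=0$ and $b<0$ are trivial or symmetric: when $b=0$, $\vecd \ne \mathbf 0$ forces $C>0$, so $R$ is a strictly convex parabola. Assume $b>0$. Feasibility forces $a+bt \ge 0$ on $[t_-,t_+]$, so $[t_-,t_+] \subseteq (-a/b,\infty)$, the interval on which $R$ is smooth. As $t \downarrow -a/b$ the denominator vanishes while the numerator stays $\ge 8 > 0$, so $R(t) \to +\infty$; consequently $R$ is strictly decreasing just to the right of $-a/b$, forcing $\ell < 0$ there. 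Being affine and negative near the left end, $\ell$ is either negative throughout $(-a/b,\infty)$, giving $R' < 0$ and $R$ strictly decreasing, or it crosses zero exactly once with positive slope, giving $R$ strictly decreasing and then strictly increasing with a unique global minimum. Either way $R$ is strictly quasi-convex on $(-a/b,\infty)$, hence on $[t_-,t_+]$.

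Finally, strictness requires $R$ to be nonconstant on every subinterval, and this comes for free from the same blow-up: $\ell \equiv 0$ would make $R$ constant, contradicting $R(t) \to +\infty$. (Equivalently, non-degeneracy follows from Cauchy--Schwarz, $\inprod{\vecu}{\vecd}^2 \le \Lt{\vecu}^2\Lt{\vecd}^2 < (\Lt{\vecu}^2 + 8)\Lt{\vecd}^2$, using $\Lt{\vecd}^2 > 0$.) Collecting the segment-wise conclusions yields strict quasi-convexity of $K$ on $\{\vecx \ge 0 : \Lo{\vecx} > 0\}$, which in particular makes any local minimizer in \eqref{eq:rn-defn} the unique global minimizer.
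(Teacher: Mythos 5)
Your proof is correct, but it takes a genuinely different route from the paper's. You reduce to one dimension: along any line $\vecx(t)=\vecu+t\vecd$ in the nonnegative orthant, $\Lo{\vecx(t)}=a+bt$ is affine and $N(t)=\Lt{\vecx(t)}^2+8$ is a quadratic bounded below by $8$, and your key computation is right --- the numerator of $R'(t)$ loses its quadratic term after cancelling one factor of $a+bt$, leaving the affine function $\ell(t)=(2Ca-Bb)t+Ba-2b(A+8)$, so $R'$ changes sign at most once. Your boundary argument pinning down the sign (blow-up of $R$ at $t=-a/b$ forces $\ell<0$ at the left end; one can also see this directly from $\ell(-a/b)=-2bN(-a/b)\leq -16b<0$) correctly excludes both an interior maximum and the constant case, and the $b=0$ and $b<0$ cases are handled as you say. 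The paper instead stays in $n$ dimensions and uses an algebraic mediant inequality: it expands $K(\lambda\vecx+(1-\lambda)\vecy)$ as a ratio of three-term sums, bounds it by $\max\bigl\{K(\vecx),\,K(\vecy),\,(\inprod{\vecx}{\vecy}+8)/(\Lo{\vecx}\Lo{\vecy})\bigr\}$, and eliminates the cross term by appending a coordinate $\sqrt{8}$ to each vector so that Cauchy--Schwarz applies strictly (the augmented vectors cannot be proportional since they agree in the new coordinate and differ elsewhere). The paper's argument is shorter and avoids calculus entirely; yours is more mechanical but yields slightly more structure --- every line restriction is strictly decreasing, strictly increasing, or strictly decreasing-then-increasing with a unique critical point --- which makes the later conclusion that a local minimizer in \eqref{eq:rn-defn} is the unique global minimizer especially transparent.
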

\begin{proof}
    Since the domain is non-negative reals, we use $\Lo{\cdot}$ to represent sums.
    Consider $\vecx,\vecy$ in domain and $\vecx \neq \vecy$. Let $\lambda \in (0,1)$ then,
    \begin{align}
        K(\lambda &\vecx + (1-\lambda) \vecy) =\\
         &\frac{\lambda^2\Lt{\vecx}^2+(1-\lambda)^2\Lt{\vecy}^2 + 2\lambda(1-\lambda)\inprod{\vecx}{\vecy}+8}{\lambda^2\Lo{\vecx}^2 + (1-\lambda)^2\Lo{\vecy}^2 + 2\lambda(1-\lambda)\Lo{\vecx}\Lo{\vecy}} \\
         &\leq \max\left\{K(\vecx),K(\vecy), \frac{\inprod{\vecx}{\vecy}+8}{\Lo{\vecx}\Lo{\vecy}}\right\}
    \end{align}
    In the above, for equality to hold, it is necessary that $K(\vecx) = K(\vecy) = \frac{\inprod{\vecx}{\vecy}+8}{\Lo{\vecx}\Lo{\vecy}}$.
    We show that this is not possible, implying strict quasi-convexity.
    
    To show $\frac{\inprod{\vecx}{\vecy}+8}{\Lo{\vecx}\Lo{\vecy}} < \max\{K(\vecx),K(\vecy)\}$,
    let $\vecx,\vecy$ be $n$-dimensional  vectors. 
    Then, in the following, consider $\vecx',\vecy'$ to be the $(n+1)$-dimensional vectors obtained by appending a $\sqrt{8}$ at the end to $\vecx,\vecy$ respectively.
    We have,
    \begin{align}
    \frac{\inprod{\vecx}{\vecy}+8}{\Lo{\vecx}\Lo{\vecy}} &= \frac{\inprod{\vecx'}{\vecy'}}{\Lo{\vecx}\Lo{\vecy}} \\
    &< \frac{\Lt{\vecx'}\Lt{\vecy'}}{\Lo{\vecx}\Lo{\vecy}} \quad\quad  (\vecx' \neq c\vecy') \label{eq:CS-ineq}\\
    &\leq \max\left\{ \frac{\Lt{\vecx'}^2}{\Lo{\vecx}^2}, \frac{\Lt{\vecy'}^2}{\Lo{\vecy}^2} \right\}\\
    &= \max\left\{ K(\vecx),K(\vecy) \right\}.
    \end{align}
    where \eqref{eq:CS-ineq} follows from Cauchy-Schwarz inequality and $\vecx'$ can not be proportional to $\vecy'$ since they have equal values in the $(n+1)$-th coordinate and $\vecx \neq \vecy$.
\end{proof}

We shall now overload the the function $f(\cdot,\cdot)$ to map variable length vectors to $\bbR$.
Corresponding to the $\epsilon$ series, we construct a series of weights denoted by $\{r_i\}_{i=1}^{n}$ which is the solution to \eqref{eq:rn-defn}.
Lemma~\ref{lem:quasi} implies that it is sufficient to search for a local minimizer in \eqref{eq:rn-defn}. 
We can find the local minimizer recursively.
We drop the privacy argument in $f(\cdot,\cdot)$ for convenience in the following writing,
i.e., $f(x_1^k)$ refers to $f(x_1^k,\eps{k}) =  \frac{\Lt{x_1^k}^2 + 8}{4\Lo{x_1^k}^2}$ (when $x_1^k \in \cD(\eps{k})$).

Next, we recursively build a sequence of scaled weights that gives the solution to \eqref{eq:rn-defn}.
We remind the readers that the $\{\epsilon_i\}_1^n$ sequence is arranged in a non-decreasing order.
Let $r_1 = \epsilon_1$ and define
    \begin{equation} \label{eq:r-def}
    r_{k+1} = \min \left\{ \epsilon_{k+1} ,  4f(\rn{k})\Lo{\rn{k}} \right\}.
\end{equation}

For clarity, $4f(\rn{k})\Lo{\rn{k}} = \frac{\Lt{\rn{k}}^2 + 8}{\Lo{\rn{k}}}$, i.e., we have $\Li{\frac{\rn{k}}{\eps{k}}} =1 $; this will be clear from Lemma~\ref{lem:sat}.
Algorithm~\ref{alg:ADPM} is an efficient implementation of the recursion described above in \eqref{eq:r-def} that runs in near-linear time.
We shall prove that this sequence generates the optimal weights $\rn{j}$ for privacy constraint $\eps{j}$ for any $j>0$.
An intuition behind the particular recursive definition is presented in Section~\ref{sec:ADPM-inter}.
A notion of saturation is defined in Definition~\ref{def:sat} which is useful in tracking some properties of the $\{r_i\}_1^n$ sequence presented in Lemma~\ref{lem:sat}.  Below, $\mathbb{I}$  denotes the indicator function.

\begin{DF}[Saturation] \label{def:sat}
For a fixed $\eps{n}$, construct $\rn{n}$ as described in \eqref{eq:r-def}.
Consider the event $\sn{k+1} = \ind{r_{k+1} = 4f(\rn{k})\Lo{\rn{k}}}$, i.e.,
the event $\sn{k+1}$ occurs when the minimum term in \eqref{eq:r-def} is
$4f(\rn{k})\Lo{\rn{k}}$. 
\end{DF}

We call this saturation since the weight assigned to the new datapoint is
independent of $\epsilon_{k+1}$; this occurs when $\epsilon_{k+1}$ is too large and assigning proportional weight is sub-optimal as suggested in Remark~\ref{rem:prop}.
Lemma~\ref{lem:sat}(A) shows that once there is saturation at some index, all the following indices are saturated as well. 
Lemma~\ref{lem:sat}(B)  also allows us to understand how fast the MSE decreases after saturation occurs, which will be used in Theorem~\ref{thm:opt} later.
Note that Lemma~\ref{lem:sat}(C) can be used to verify that the $\{r_i\}$ sequence  satisfies $\rn{k} \in \cD(\eps{k}) \ \forall k$  \footnote{Thus, $\Li{\frac{\rn{k}}{\eps{k}}} =1 $.} and hence, the sequence is in the domain of the optimization in \eqref{eq:rn-defn}. 
\begin{lemma} \label{lem:sat}
    If $\sn{j+1}$ occurs, then we have the following \\
    A) $\bigwedge_{i=j+1}^{n} \sn{i}$ occurs, \\
    B) $f(\rn{j+1})=f(\rn{j})/(1+4f(\rn{j}))$, \\
    C) $r_{j+1} = r_{j+2} =  \ldots = r_n$ . \\
    Further, we have 
    \begin{equation}
        r_i \leq r_k, \text{ for any } i < k. \label{eq:r-inc}
    \end{equation}
\end{lemma}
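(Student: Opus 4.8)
The plan is to reduce all four claims to the behaviour of a single scalar, the saturation threshold $T_k \defn 4f(\rn{k})\Lo{\rn{k}} = \frac{\Lt{\rn{k}}^2 + 8}{\Lo{\rn{k}}}$. In this notation the recursion \eqref{eq:r-def} reads $r_{k+1} = \min\{\epsilon_{k+1}, T_k\}$, and $\sn{k+1}$ is exactly the event $\{\epsilon_{k+1} \geq T_k\}$, in which case $r_{k+1} = T_k$. I would prove the claims in the order (B), (A), (C), and finally \eqref{eq:r-inc}.

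The engine of the argument is the identity that $T$ is \emph{invariant} under a saturating step: if $r_{k+1} = T_k$, then $T_{k+1} = T_k$. To see this, abbreviate $a = \Lo{\rn{k}}$ and $b = \Lt{\rn{k}}^2 + 8$, so that $T_k = b/a$ and, under saturation, $r_{k+1} = b/a$. A direct substitution gives $\Lo{\rn{k+1}} = a + b/a = (a^2+b)/a$ and $\Lt{\rn{k+1}}^2 + 8 = b + (b/a)^2 = b(a^2+b)/a^2$, so that $T_{k+1} = b/a = T_k$. The same two quantities settle part (B) at once (the case $k = j$): dividing, $f(\rn{j+1}) = \frac{\Lt{\rn{j+1}}^2+8}{4\Lo{\rn{j+1}}^2} = \frac{b}{4(a^2+b)}$, while $4f(\rn{j}) = b/a^2$ gives $\frac{f(\rn{j})}{1+4f(\rn{j})} = \frac{b/(4a^2)}{(a^2+b)/a^2} = \frac{b}{4(a^2+b)}$, so the two agree.

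Parts (A) and (C) then follow by induction from the invariance together with monotonicity of $\bm\epsilon$. If $\sn{j+1}$ occurs, then $r_{j+1} = T_j$ and $T_{j+1} = T_j$; since $\epsilon_{j+2} \geq \epsilon_{j+1} \geq T_j = T_{j+1}$, we get $r_{j+2} = \min\{\epsilon_{j+2}, T_{j+1}\} = T_{j+1}$, i.e.\ $\sn{j+2}$ occurs and, again by invariance, $T_{j+2} = T_{j+1}$. Iterating this step up to index $n$ yields $\bigwedge_{i=j+1}^n \sn{i}$, which is (A). Along the way the threshold never changes, $T_j = T_{j+1} = \cdots = T_{n-1}$, and each saturated weight equals the current threshold, so $r_{j+1} = r_{j+2} = \cdots = r_n = T_j$, which is (C).

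For the monotonicity \eqref{eq:r-inc} it suffices to show $r_{k+1} \geq r_k$ for every $k$, and the key sublemma is $T_k \geq r_k$ for all $k$. The base case is $T_1 = r_1 + 8/r_1 > r_1$; for $k \geq 2$, clearing denominators shows $T_k \geq r_k$ is equivalent to $\Lt{\rn{k-1}}^2 + 8 \geq r_k \Lo{\rn{k-1}}$, i.e.\ to $T_{k-1} \geq r_k$, which holds because $r_k = \min\{\epsilon_k, T_{k-1}\} \leq T_{k-1}$ by definition. Now both arguments of the minimum defining $r_{k+1}$ dominate $r_k$: we have $\epsilon_{k+1} \geq \epsilon_k \geq r_k$ (using $r_k \leq \epsilon_k$) and $T_k \geq r_k$ by the sublemma, hence $r_{k+1} = \min\{\epsilon_{k+1}, T_k\} \geq r_k$. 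Chaining over $k$ gives $r_i \leq r_k$ for all $i < k$. I do not expect a genuine obstacle; the only point demanding care is the invariance $T_{k+1} = T_k$ under saturation, which drives (A) and (C) and must be checked by an honest (if short) substitution.
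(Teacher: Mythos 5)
Your proof is correct and takes essentially the same route as the paper's: part (B) is the same direct substitution, and your invariance $T_{k+1}=T_k$ under a saturating step is precisely the identity $4f(\rn{j+1})\Lo{\rn{j+1}} = 4f(\rn{j})\Lo{\rn{j}}$ that the paper extracts from (B) to drive (A) and (C). The one place your write-up is more complete is the final monotonicity claim \eqref{eq:r-inc}: your sublemma $T_k \geq r_k$ (reduced to $T_{k-1}\geq r_k$, which holds because $r_k=\min\{\epsilon_k,T_{k-1}\}$) explicitly handles the transition index where saturation first occurs, i.e.\ why $r_{j+1}\geq r_j$ there, whereas the paper's proof only notes that the $r$'s equal the non-decreasing $\epsilon$'s before saturation and are constant afterwards, implicitly deferring that boundary step to Lemma~\ref{lem:valid} in the appendix.
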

\begin{proof}
If $\sn{j+1}$, we have
    \begin{align}
        4f(\rn{j})\Lo{\rn{j}} &\leq \epsilon_{j+1}, \\
        r_{j+1} &= 4f(\rn{j})\Lo{\rn{j}}.
    \end{align}
    Evaluating $f(\rn{j+1})$ using this, 
    \begin{align}
        f(\rn{j+1}) &= \frac{\Lt{\rn{j}}^2 + r_{j+1}^2 + 8}
        {4\Lo{r_1^j}^2(1+4f(\rn{j}))^2} \\
        &= \frac{4\Lo{r_1^j}^2f(\rn{j}) + r_{j+1}^2}
        {4\Lo{r_1^j}^2(1+4f(\rn{j}))^2} \\
        &= \frac{4\Lo{r_1^j}^2f(\rn{j}) + 4^2f(\rn{j})^2\Lo{\rn{j}}^2}
        {4\Lo{r_1^j}^2(1+4f(\rn{j}))^2} \\
        &= f(\rn{j})/(1+4f(\rn{j})).
    \end{align}
    This proves (B). 
    
    To check for $\sn{j+2}$, compare 
    $\epsilon_{j+2}$ and $4f(\rn{j+1})\Lo{\rn{j+1}}$.
    Note that $\epsilon_{j+2} \geq \epsilon_{j+1}$ since $\epsilon$-sequence is non-decreasing.
    Further, $\Lo{\rn{j+1}} = \Lo{\rn{j}}(1+4f(\rn{j}))$ which implies $4f(\rn{j+1})\Lo{\rn{j+1}} = 4f(\rn{j})\Lo{\rn{j}}$ by (B).
    Therefore, $\sn{j+2}$ occurs as well and (A) is proved by repeating the same argument for $\sn{j+3},\ldots,\sn{n}$.
    
    Further, as noted, $4f(\rn{j+1})\Lo{\rn{j+1}} = 4f(\rn{j})\Lo{\rn{j}}$ when $\sn{j+1}$. 
    Along with (A) and the fact that $\epsilon_k$ are increasing implies $r_{j+1} = r_{j+2}$. 
    Repeating the same argument proves (C).
    
    $r_i \leq r_k$ for any $i < k$ follows immediately: if saturation does not occur until index $j$, then $r_1^{j-1} = \eps{j-1}$, which is non-decreasing. 
    Once saturation happens at any index (not necessary to occur), then the $r$ values stay constant by part (C).
\end{proof}

\begin{figure}
    \centering
    \includegraphics[width=0.5\textwidth]{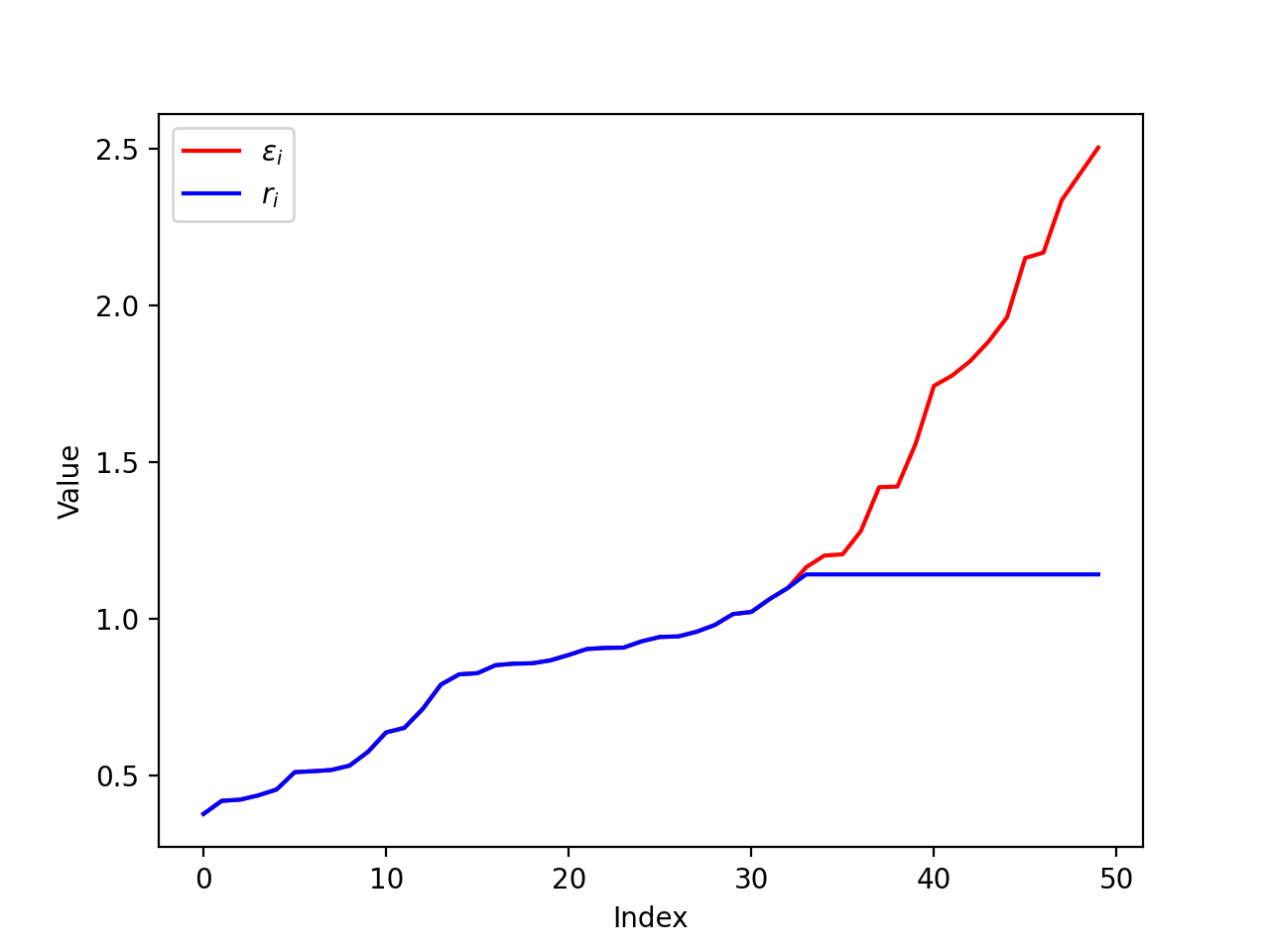}
    \caption{A specific $\{\epsilon_i\}_{i=1}^{50}$ sequence and its corresponding $\{ r_i\}_{i=1}^{50}$, given by \eqref{eq:r-def}, plotted. %
    }
    \label{fig:r-eps}
\end{figure}

Figure~\ref{fig:r-eps} shows a specific example of  $\{\epsilon_i\}_{i=1}^{50}$ and corresponding $\{ r_i\}_{i=1}^{50}$. 
After saturation, the $r_i$ values remain constant as proved in Lemma~\ref{lem:sat}(C).

Finally, we show that $\rn{n} = \argmin_{x_1^n \in \cD(\eps{n})} \frac{\Lt{x_1^n}^2 + 8}{4\Lo{x_1^n}^2}$.

\begin{lemma} \label{lem:local-min}
For a fixed $\eps{n}$, construct $\rn{n}$ as described in \eqref{eq:r-def}.
Then, $\rn{n}$ is a local optima for $f(x_1^n)$ on the domain $\cD(\eps{n})$.
\end{lemma}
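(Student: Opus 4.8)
The plan is to establish local optimality by verifying the first-order optimality condition on the convex set $\cD(\eps{n})$ and then ruling out the degenerate directions with a second-order estimate. Since $\Li{\vecx/\bm\epsilon}=1$ throughout $\cD(\eps{n})$, on this domain $f$ coincides with $\frac{\Lt{\vecx}^2+8}{4\Lo{\vecx}^2}$, which is smooth on the positive orthant; a direct differentiation gives
\begin{equation}
\frac{\partial f}{\partial x_i}(\vecx) = \frac{x_i - 4f(\vecx)\Lo{\vecx}}{2\Lo{\vecx}^2}.
\end{equation}
Hence, writing $g_k \defn 4f(\rn{k})\Lo{\rn{k}}$, the $i$-th partial of $f$ at $\rn{n}$ has the sign of $r_i - g_n$, and it suffices to show $\inprod{\nabla f(\rn{n})}{\vecd}\ge 0$ for every feasible direction $\vecd$ of $\cD(\eps{n})$ at $\rn{n}$.

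First I would extract the needed structure from Lemma~\ref{lem:sat}. Expanding $g_{k+1}$ exactly as in the proof of Lemma~\ref{lem:sat}(B) shows that $g_{k+1}-g_k$ has the same sign as $r_{k+1}-g_k$; since $r_{k+1}=\min\{\epsilon_{k+1},g_k\}\le g_k$, the sequence $g_k$ is non-increasing, and it is constant once saturation begins. Letting $j$ be the last index with $r_j=\epsilon_j$ (the last unsaturated index), this gives $g_n=g_j$. Two consequences follow. On the saturated block $\{j+1,\dots,n\}$, Lemma~\ref{lem:sat}(C) together with the constancy of $g$ yields $r_i=g_j=g_n$, so $\nabla f(\rn{n})_i=0$ there. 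On the unsaturated block $r_i=\epsilon_i\le\epsilon_j$, and because the one-line computation above writes $g_k$ as a weighted average of $r_k$ and $g_{k-1}$, it always lies between them, so $\epsilon_j=r_j\le g_j=g_n$; hence $r_i-g_n\le 0$ and $\nabla f(\rn{n})_i\le 0$ for every $i$. In short, the gradient is componentwise non-positive and vanishes on the saturated block.

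Next I would describe the feasible directions. In $\cD(\eps{n})$ the scaled coordinates $x_i/\epsilon_i$ are non-increasing with $x_1/\epsilon_1=1$ held fixed, so on the unsaturated block (where $r_i/\epsilon_i=1$ sits at the maximal value) every feasible move can only lower them, forcing $d_i\le 0$ for each unsaturated $i$ (and $d_1=0$). Combining with the previous paragraph, for any feasible $\vecd$ each summand of $\inprod{\nabla f(\rn{n})}{\vecd}$ is non-negative: on unsaturated indices $\nabla f(\rn{n})_i\le 0$ and $d_i\le 0$, while on saturated indices $\nabla f(\rn{n})_i=0$. This proves the first-order condition $\inprod{\nabla f(\rn{n})}{\vecd}\ge 0$.

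The main obstacle is that first-order stationarity does not by itself give a local minimum, because $\nabla f(\rn{n})$ vanishes on the entire saturated block (a strictly quasi-convex function can be stationary at a non-minimizer). I would close this gap with a second-order computation along the zero-derivative directions, which generically are exactly those supported on the saturated block. Perturbing $\rn{n}+t\vecd$ and letting $P=\Lt{\rn{n}}^2+8$, $Q=\Lo{\rn{n}}$, and $s,q$ be the sum and sum of squares of the entries of $\vecd$, the first-order term cancels (since $g_n=P/Q$) and one obtains
\begin{equation}
f(\rn{n}+t\vecd)-f(\rn{n})=\frac{t^2\,(qQ^2-Ps^2)}{4Q^2(Q+st)^2}.
\end{equation}
Since all perturbed coordinates equal $g_n=P/Q$ while coordinate $1$ is fixed at $\epsilon_1>0$, the total mass $Q$ strictly exceeds the mass $g_n\,m$ carried by the $m$ perturbed coordinates, i.e.\ $Q^2>Pm$; combined with Cauchy--Schwarz $s^2\le m\,q$ this gives $qQ^2-Ps^2\ge q(Q^2-Pm)>0$ for $\vecd\neq 0$. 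Thus $f$ strictly increases along every such direction, so $\rn{n}$ is a strict local minimizer on $\cD(\eps{n})$, and globality then follows immediately from the strict quasi-convexity of Lemma~\ref{lem:quasi}.
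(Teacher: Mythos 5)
Your proof is correct, and its first\mbox{-}order half is essentially the paper's argument: the paper computes the same partial derivative $\frac{\partial f}{\partial x_i}=\frac{x_i-4f(\vecx)\Lo{\vecx}}{2\Lo{\vecx}^2}$ (decomposed there as $Y_A+Y_B$), shows it vanishes at saturated indices and is non-positive at unsaturated ones, and notes that the domain constraint $x_i/\epsilon_i\le x_1/\epsilon_1=1$ blocks the only first-order-decreasing perturbations. Where you genuinely depart is in recognizing that this first-order information alone does not certify a local minimum, since the gradient vanishes identically on the saturated block and strict quasi-convexity does not rule out a stationary non-minimizer; the paper stops at the first-order cases and declares local minimality. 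Your second-order computation along directions supported on the saturated block --- the identity $f(\rn{n}+t\vecd)-f(\rn{n})=\frac{t^2(qQ^2-Ps^2)}{4Q^2(Q+st)^2}$ together with $Q^2>Pm$ (from $\epsilon_1>0$) and Cauchy--Schwarz $s^2\le mq$ --- is correct and closes this gap cleanly, showing strict increase along every such direction. (I verified the algebra: the first-order term cancels precisely because the saturated coordinates all equal $g_n=P/Q$.) The net effect is that your proof is strictly more complete than the paper's: same skeleton, plus a necessary second-order patch. The only cosmetic point is that ray-wise increase plus the strict quasi-convexity of Lemma~2 is what ultimately licenses the jump from ``increases along every feasible ray'' to ``local (hence global) minimizer''; you invoke quasi-convexity only for globality, but it is also implicitly doing work in that last step, and it would be worth saying so explicitly.
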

\begin{proof}
Consider the partial derivative of $f(\cdot)$ with respect to $x_i$ at $x_1^n = r_1^n$.

We have 
\begin{align}
    \frac{\partial f(\rn{n})}{\partial r_i} &= \frac{r_i \Lo{\rn{n}}-\Lt{\rn{n}}^2-8}{2\Lo{\rn{n}}^3} \\
    &= \frac{Y_A + Y_B}{2\Lo{\rn{n}}^3}.
\end{align}
where $Y_A = r_i\Lo{r_{i+1}^n}-\Lt{r_{i+1}^{n}}^2$ and $Y_B = \Lo{\rn{i-1}}(r_i -4\Lo{\rn{i-1}}f(\rn{i-1}))$.

    Now consider the two cases:
    \begin{itemize}
        \item $\sn{i}$: We have $r_i = 4\Lo{\rn{i-1}}f(\rn{i-1})$ so $Y_B = 0$ and by Lemma~\ref{lem:sat}(C), $Y_A = 0$. Thus, $\frac{\partial f(\rn{n})}{\partial r_i} = 0$.
        \item $\neg \sn{i}$: We have $Y_B \leq 0$ and by \eqref{eq:r-inc}, $Y_A \leq 0$.
        Therefore, the only local perturbation that decreases the objective would need to increase $r_i$, which is not possible since $r_i = \epsilon_i$ and we have the domain restriction $\frac{r_i}{\epsilon_i} \leq \frac{r_1}{\epsilon_1} = 1$.
    \end{itemize}

    The above two cases show that $\rn{n}$ is indeed a local minimum for  $f(x_1^n)$ on the domain $\cD(\eps{n})$.
\end{proof}

\begin{proof}[\textbf{Proof of Theorem~\ref{thm:UB}}]
    Note that $f(x_1^n)$ is strictly quasi-convex on the closed convex domain $\cD(\eps{n})$ by Lemma~\ref{lem:quasi}.
Lemma~\ref{lem:local-min} shows that $\rn{n}$ is a local minimizer of the function, thus, it is also a global minimizer.
By Corollary~\ref{cor:opt}, we can use the weights $\rn{n}/\Lo{\rn{n}}$.
Note that if the upper bound on MSE, $f(\rn{n})$, is too high then one can simply output $0$ as the estimator and incur a maximum of $\frac{1}{4}$ as the MSE, as is done in the ADPM algorithm.
\end{proof}

\subsection{Interpreting ADPM} \label{sec:ADPM-inter}
ADPM exploits a crucial property of the solution of \eqref{eq:rn-defn}:
 $\vecr^* = \min_{\vecx \in \cD(\eps{n})} f(\vecx,\eps{n})$ is closely related to $\vecr' = \min_{\vecx \in \cD(\eps{n-1})} f(\vecx,\eps{n-1})$.
That is, the optimal weights when there are $n-1$ users with privacy constraints $\eps{n-1}$ and the optimal weights when there are $n$ users with privacy constraint $\eps{n}$ are closely related.
In fact, $\vecr^*_i = \vecr'_i $ for $1 \leq i \leq n-1$.
This recursive property allows us to efficiently give a solution for $\vecr^*$ given $\vecr'$, where the last component of the vector $\vecr^*$ can be found by performing a local minimization while fixing the first $n-1$ components to be $\vecr'$. 
This is precisely why we chose to work with the scaled domain $\cD(\eps{n})$ instead of the simplex.\\

From Lemma~\ref{lem:sat}, we see that upon saturation at some index $k+1$, all the subsequent indices remain saturated and
they all have the same weight $r_{k+1} = 4\Lo{\rn{k}}f(\rn{k})$ (and $\leq \epsilon_{k+1}$).
Since no saturation occured before index $k+1$, we have $\rn{k} = \eps{k}$, and thus, $r_j = \frac{\Lt{\eps{k}}^2 + 8}{\Lo{\eps{k}}} \ \forall j  \geq k+1$.
By the discussion around \eqref{eq:w-con}, using weights $\vecw = \rn{n}/\Lo{\rn{n}}$ and $\eta = \Li{\vecw/\bm\epsilon} = 1/\Lo{\rn{n}}$ gives a privacy of 
$r_i$ to user $i$. 
Thus, all the privacy-desiring users with lower $\epsilon$ requirement till user $k$ get exactly the privacy they ask for.
However, for users who want less privacy and $\epsilon \geq \epsilon_{k+1}$ receive a higher privacy guarantee of $ \frac{\Lt{\eps{k}}^2 + 8}{\Lo{\eps{k}}}$ for free.

\begin{rem}[Special Case of Two Groups of Privacy]
While the minimax optimality of ADPM is proven in this work, let us consider a special setting here to get a better intuition.
Consider the case of $n$ users where a fraction $f$ of the users all have a common privacy level $\epsilon_1$ and the rest of the users have a common privacy level $\epsilon_2$ (without loss of generality assume $\epsilon_1 \leq \epsilon_2$).
This setting was consider in \cite{isit-paper}.

The condition for saturation was found to be $\epsilon_2 \geq \epsilon_1 + \frac{8}{nf\epsilon_1}$.
It is easy to see that we recover the same condition from \eqref{eq:r-def}.
One can also verify that the weights assigned according to \eqref{eq:r-def} match the optimal weights derived in \cite{isit-paper}.
Thus, keeping $n$, $\epsilon_1$ and $f$ fixed, if $\epsilon_2$ is increased from $\epsilon_1$, then until $\epsilon_2 \leq \epsilon_1 + \frac{8}{nf\epsilon_1}$, the optimal affine estimator weighs the datapoints proportional to the privacy level.
After this saturation point, the weights do not change and latter group receives a privacy of $\epsilon_1 + \frac{8}{nf\epsilon_1}$ for free despite possibly having no privacy requirements $(\epsilon_2 \to \infty)$.
\end{rem}

\section{Lower Bound} \label{sec:LB}

In a system with $n$ users with  homogeneous differential privacy requirement $\epsilon$, the known minimax rate for mean estimation under mean-squared error is known to be $\Theta(\frac{1}{n}+\frac{1}{(n\epsilon)^2})$.
Many of the lower bound techniques in the literature for DP separately obtain the $1/n^2$ term and add the $1/n$ term by citing classical statistics results \cite{Kamath19,Duchi14}. 
Such an approach is not suitable here since the $1/n$ and the $1/n^2$ terms become intertwined by the privacy levels for heterogeneous setting.
For example, consider a special case where out of $n$ users, $n-m$ users have privacy level of $\epsilon \to 0$, and the rest $m$ users have privacy level $\epsilon \to \infty$. This corresponds to the classical mean estimation problem with $m$ samples and the mean-squared error should be of order $O(\frac{1}{m})$.
Simply adding $1/n$ to the lower bound cannot give tight results (consider the case where $m$ is a constant and $n$ is large).

We use a form of Le Cam's method adapted to differential privacy constraint to obtain a lower bound based on ideas from \cite{Duchi13,Duchi14,Duchi16}.
Our method is similar to \cite{Asu22} but it is stronger since it can handle arbitrarily large $\epsilon$ values, as is required for the case when we have a public dataset. Intuitively, DP restricts the variation in output probability with varying inputs which helps  bound the total-variation norm term in Le Cam's method.

We remind the readers that $\bm\epsilon$ is assumed to be in a non-decreasing order. 

\begin{thm}[Lower Bound] \label{thm:lb}
For privacy vector $\bm\epsilon$ of length $n$, we have
\begin{align}
L(\bm\epsilon) &\gtrsim H(\bm\epsilon) \wedge \frac{1}{4}, \\
\text{where }\ \  H(\bm\epsilon) &= \max_{i=0}^n  \frac{1}{\Lo{\eps{i}}^2 + n-i}.
\end{align}
\end{thm}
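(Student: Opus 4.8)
The plan is to establish, for each index $i \in \{0,1,\dots,n\}$ separately, a lower bound of order $\big(\Lo{\eps{i}}^2 + (n-i)\big)^{-1}$ by a two-point (Le Cam) argument, and then take the maximum over $i$ at the end. Fix $i$ and let $P_+, P_-$ be the two distributions supported on the endpoints $\{-\tfrac12,+\tfrac12\}$ of $\cX$ with $P_\pm(\tfrac12) = \tfrac12 \pm \delta$, so that their means are $\pm\delta$ and $\TV{P_+ - P_-} = 2\delta$, where $\delta \in (0,\tfrac12]$ will be tuned. Writing $\mathbb{P}_\pm$ for the law of the output $M(\vecX)$ of an arbitrary $\bm\epsilon$-DP mechanism $M$ when $\vecX \sim P_\pm^n$, Le Cam's two-point inequality for squared loss gives $L(\bm\epsilon) \geq \delta^2\big(1 - \TV{\mathbb{P}_+ - \mathbb{P}_-}\big)$, so the whole task reduces to upper bounding the total-variation distance $\TV{\mathbb{P}_+ - \mathbb{P}_-}$ between output laws.

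The key device is a hybrid that isolates the privacy contribution of the $i$ strictest users from the purely statistical contribution of the remaining $n-i$ users. Let $\mathbb{P}_{\mathrm{mix}}$ denote the output law when users $1,\dots,i$ draw from $P_-$ while users $i+1,\dots,n$ draw from $P_+$. By the triangle inequality, $\TV{\mathbb{P}_+ - \mathbb{P}_-} \leq \TV{\mathbb{P}_+ - \mathbb{P}_{\mathrm{mix}}} + \TV{\mathbb{P}_{\mathrm{mix}} - \mathbb{P}_-}$. The first pair differs only in the distribution of the first $i$ coordinates, and the second pair differs only in the last $n-i$ coordinates; I would bound the former using the DP constraint and the latter using only data processing (never invoking privacy on the large-$\epsilon$ users).

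For the statistical term, conditioning on the common first $i$ coordinates and applying post-processing plus tensorization yields $\TV{\mathbb{P}_{\mathrm{mix}} - \mathbb{P}_-} \leq \TV{P_+^{\,n-i} - P_-^{\,n-i}} \leq \sqrt{\tfrac{n-i}{2}\,\kl{P_+}{P_-}} \lesssim \delta\sqrt{n-i}$, since $\kl{P_+}{P_-} = O(\delta^2)$ for the two near-uniform Bernoullis. For the privacy term, I would condition on the common last $n-i$ coordinates and couple the first $i$ coordinates of the two product laws coordinatewise so that $\pr{X_j \neq X'_j} = \TV{P_+ - P_-} = 2\delta$ for $j \le i$. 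Conditioned on the (random) disagreement set $D \subseteq [i]$, group privacy makes $M$ act as a $\big(\sum_{j\in D}\epsilon_j\big)$-DP map, whence $\TV{M(\vecx) - M(\vecx')} \leq 1 - e^{-\sum_{j\in D}\epsilon_j} \leq \sum_{j\in D}\epsilon_j$; averaging over $D$ via joint convexity of total variation gives $\TV{\mathbb{P}_+ - \mathbb{P}_{\mathrm{mix}}} \leq 2\delta\sum_{j=1}^{i}\epsilon_j = 2\delta\,\Lo{\eps{i}}$.

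Combining the two bounds gives $\TV{\mathbb{P}_+ - \mathbb{P}_-} \lesssim \delta\big(\Lo{\eps{i}} + \sqrt{n-i}\big)$, so choosing $\delta \asymp \big(\Lo{\eps{i}} + \sqrt{n-i}\big)^{-1}$ (which forces the total variation below, say, $\tfrac12$) yields $L(\bm\epsilon) \gtrsim \delta^2 \asymp \big(\Lo{\eps{i}} + \sqrt{n-i}\big)^{-2} \asymp \big(\Lo{\eps{i}}^2 + (n-i)\big)^{-1}$, using $(a+b)^2 \asymp a^2+b^2$. When this prescription would demand $\delta > \tfrac12$ (i.e.\ when $H(\bm\epsilon)$ is large), I cap $\delta$ at a constant, which produces the constant floor and hence the $\wedge \tfrac14$ in the statement. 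Taking the maximum over $i$ gives $L(\bm\epsilon) \gtrsim H(\bm\epsilon)\wedge\tfrac14$. I expect the main obstacle to be the hybrid coupling/group-privacy step: one must argue the disagreement-set averaging and the joint-convexity bound carefully, and crucially use privacy \emph{only} on the first $i$ coordinates so that arbitrarily large $\epsilon_{i+1},\dots,\epsilon_n$ never enter the exponent $\sum_{j\in D}\epsilon_j$ and blow the bound up — this is exactly what lets the argument handle the public-data / $\epsilon \to \infty$ regime that earlier approaches could not.
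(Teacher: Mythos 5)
Your proposal is correct and follows essentially the same route as the paper: the same two-point Le Cam construction with endpoint Bernoullis, the same hybrid at index $i$ separating a Pinsker/KL bound on the $n-i$ lenient users from a DP-based bound on the $i$ strict users, and the same choice $\delta \asymp (\Lo{\eps{i}}+\sqrt{n-i})^{-1}$ followed by a maximum over $i$. The only substantive difference is in proving the key total-variation lemma: you bound the privacy term via a maximal coupling and group privacy on the disagreement set $D$, whereas the paper telescopes one coordinate at a time through the hybrid product measures and applies the single-coordinate bound $|\pr{M(\vecx)\in S}-\pr{M(\vecx'^j)\in S}|\le 1-e^{-\epsilon_j}$ (Lemma~\ref{lem:DP-imply}); both yield the same $2\delta\Lo{\eps{i}}$ term and both avoid the $e^{\epsilon}-1$ blow-up needed to handle arbitrarily large $\epsilon$.
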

\begin{proof}
    Let $P_1$, $P_2$ be two distributions in $\cP$ and let $M$ be any $\bm\epsilon$-DP estimator of the mean, then denote the output distribution of $M(\vecX)$ with $\vecX \sim P_i^n$ as $Q_i$ for $i=1,2$.
    In other words, $Q_i$ is a distribution over $\cY$ and $Q_i(A) = \bbP_{\vecX \sim P_i^n}\{M(\vecX) \in A\}$.

Let $\delta \in [0,0.5]$; consider the distribution $P_1$ which is $0.5$ with probability $\frac{1+\delta}{2}$ and -0.5 with probability $\frac{1-\delta}{2}$. Similarly, $P_2$ is $0.5$ with probability $\frac{1-\delta}{2}$ and -0.5 with probability $\frac{1+\delta}{2}$. \\

In this case, $\mu_{P_1} = \delta/2$ and $\mu_{P_2} = -\delta/2$. Further, $\Lnorm{P_1 - P_2}{TV} = \delta$ and $\kl{P_1}{P_2} \leq 3 \delta^2 $ (for $\delta \in [0,0.5]$). 
Define $\gamma = \frac{1}{2}\left| \mu_{P_1} - \mu_{P_2} \right| = \delta/2$
, then, Le Cam's method specialized to differential privacy setting (see \cite{Duchi13,Duchi14,Duchi16}) yields the lower bound
\begin{equation} \label{eq:LeC}
L(\bm\epsilon) \geq \frac{\gamma^2}{2}(1 - \Lnorm{Q_1 - Q_2}{TV}).
\end{equation}
Using Lemma~\ref{lem:TV-bound} in the Appendix, and $1-x\leq e^{-x}\, \forall x\geq 0$, we obtain 
\begin{align} 
\Lnorm{Q_1 - Q_2}{TV} &\leq 2\delta \Lo{\eps{k}} + \delta\sqrt{\frac{3(n-k)}{2}} \quad \forall k \in \{0,\ldots,n\} \\
&\leq 2\delta \Lo{\eps{k}} + \delta\sqrt{4(n-k)} \quad \forall k  .\label{eq:QB}
\end{align}

Note that \eqref{eq:QB} holds for arbitrarily large $\epsilon_i$ values and degrades gracefully as compared to the $e^{\epsilon_i} - 1$ bound obtained in \cite[Lemma 3]{Asu22}. 
We could achieve this due to the stronger bound we derive in Lemma~\ref{lem:DP-imply} and Lemma~\ref{lem:TV-bound}. In particular, this allows us to deal with the general case when one of the datasets is public.

 Using \eqref{eq:QB} in \eqref{eq:LeC}, we obtain 
\begin{align}
 L(\bm\epsilon) &\geq \frac{\delta^2}{8}\left[1 - \delta \left(2\Lo{\eps{k}} + \sqrt{4(n-k)} \right)\right] \forall k. \label{eq:L1}
\end{align}
Setting $\delta = \frac{1}{4\Lo{\eps{k}} + 4\sqrt{(n-k)}} \wedge \frac{1}{2}$, in \eqref{eq:L1}, get 
\begin{align}
 L(\bm\epsilon) &\geq \frac{1}{16} \left( \frac{1}{16(\Lo{\eps{k}} + \sqrt{(n-k)})^2}  \wedge \frac{1}{4} \right) \forall k \\
 &\geq  \frac{1}{16} \left( \frac{1}{32(\Lo{\eps{k}}^2 + (n-k))}  \wedge \frac{1}{4} \right) \forall k \\
  &\geq  \frac{1}{512} \left( \frac{1}{\Lo{\eps{k}}^2 + n-k}  \wedge \frac{1}{4} \right) \forall k \\ 
\implies    L(\bm\epsilon) &\geq \frac{1}{512} \left( H(\bm\epsilon)  \wedge \frac{1}{4} \right) \label{eq:L2}
\end{align}
\end{proof}
\section{Optimality} \label{sec:OPT}

It remains to show that the lower and the upper bound are within constant factor of each other. 
Concretely, we prove that there exists an universal constant $c$, independent of $n$ and $\epsilon_1,\ldots,\epsilon_n$, such that
\begin{equation}
    c \left( f(\rn{n})  \wedge \frac{1}{4} \right)\leq \frac{1}{512} \left( H(\eps{n})  \wedge \frac{1}{4} \right).
\end{equation}
In the above, $\rn{n}$ is the ADPM weights defined in \eqref{eq:r-def}.
Showing the above inequality would imply 
\begin{equation}
    c \left( f(\rn{n})  \wedge \frac{1}{4} \right)\leq L(\eps{n}) \leq  \left( f(\rn{n})  \wedge \frac{1}{4} \right),
\end{equation}
proving the minimax optimality of ADPM.

Thus, it suffices to show 
$ c' f(\rn{n}) \leq H(\eps{n})$ for all $n$, and $\eps{n}$, leading to $(c' \wedge 1) \left( f(\rn{n})  \wedge \frac{1}{4} \right)\leq  \left( H(\eps{n})  \wedge \frac{1}{4} \right)$. 
In this Section, we show that this is indeed true despite $f(\rn{n})$ and $H(\eps{n})$ being expressed in rather different forms.
We show this via an indirect, recursive-like comparison of the two.
Theorem~\ref{thm:opt} states our main result on this.

It should be noted that the proof for Theorem~\ref{thm:opt} is done to show that the lower and upper bound are within constant factors of each other without care to make the constant sharp.
One can possibly get better constants with finer analysis (see Section~\ref{sec:Exp-Lem}).

\begin{thm}[Optimality] \label{thm:opt}
For any $n$ and $\eps{n}$, it holds that
\begin{equation}
    \frac{1}{443}f(\rn{n}) \leq H(\eps{n}).
\end{equation}
Therefore, we have
\begin{equation}
     \frac{1}{226816}\left( f(\rn{n}) \wedge \frac{1}{4} \right) \leq L(\eps{n}) \leq  f(\rn{n}) \wedge \frac{1}{4}.
\end{equation}
\end{thm}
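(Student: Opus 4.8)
The plan is to reduce the whole comparison to the ``proportional'' (no-saturation) regime by exploiting the recursive structure already established in Lemma~\ref{lem:sat}. Write $S_i = \Lo{\eps{i}}$ and $g_i = \Lt{\eps{i}}^2 + 8$, so that $f(\eps{i}) = g_i/(4S_i^2)$. I first record two one-step relations. On the upper-bound side, whenever $\sn{n}$ occurs, Lemma~\ref{lem:sat}(B) gives the clean recursion $f(\rn{n}) = \phi(f(\rn{n-1}))$ with $\phi(x) = x/(1+4x)$. On the lower-bound side I claim $H(\eps{n}) \ge \psi(H(\eps{n-1}))$ with $\psi(x) = x/(1+x)$, and this holds for \emph{every} $n$: if the index $j^\star$ attains $H(\eps{n-1})$, then the $j^\star$-th term of $H(\eps{n})$ equals $1/(1/H(\eps{n-1}) + 1) = \psi(H(\eps{n-1}))$, since going from length $n-1$ to $n$ raises each count $n-j$ by exactly one.

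Next I would prove a monotone-step lemma: if $f(\rn{n-1}) \le C\,H(\eps{n-1})$ for a constant $C \ge 1/4$ and step $n$ is saturated, then $f(\rn{n}) \le C\,H(\eps{n})$. Indeed, monotonicity of $\phi$ gives $f(\rn{n}) \le C H(\eps{n-1})/(1 + 4 C H(\eps{n-1}))$, while $C H(\eps{n}) \ge C H(\eps{n-1})/(1 + H(\eps{n-1}))$; comparing denominators, the inequality reduces to $4C \ge 1$. By Lemma~\ref{lem:sat}(A), saturation persists once it begins at some index $p+1$, so I can peel the entire saturated suffix off one index at a time. This reduces the theorem to the single estimate $f(\eps{p}) \le 443\,H(\eps{p})$ on the unsaturated prefix, where $\rn{p} = \eps{p}$; since $443 \ge 1/4$ the peeling is valid, and if saturation never occurs then $p=n$ and there is nothing to peel.

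The remaining core estimate is thus purely about a no-saturation sequence. As $H(\eps{p}) \ge 1/(S_i^2 + p - i)$ for every $i$, it suffices to exhibit an index $i \in \{0,\dots,p\}$ with $S_i^2 + (p-i) \le 1772\,S_p^2/g_p$, where $1772 = 4\cdot 443$. Two structural facts drive the choice of $i$. First, the no-saturation inequalities $\epsilon_{k+1} S_k < g_k$ imply $g_k < g_{k-1} S_k/S_{k-1}$, so $g_k/S_k$ is non-increasing; hence every $\epsilon_i$ lies in the bounded window $[\epsilon_1,\ \epsilon_1 + 8/\epsilon_1]$ and $g_p \le (\epsilon_1 + 8/\epsilon_1)S_p$. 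Second, since $\bm\epsilon$ is non-decreasing, prefix averages are monotone, giving $S_k \le (k/p)S_p$. When $g_p \le 1772$ the index $i = p$ already works, as then $S_p^2 \le 1772\,S_p^2/g_p$. Otherwise I would pick an intermediate index that balances $S_i^2$ against $p-i$, using the window bound and the averaging bound to keep each term below half the budget $1772\,S_p^2/g_p$.

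The main obstacle is precisely this intermediate-index regime ($g_p$ large, $\epsilon_1$ small), where neither endpoint index suffices and one must trade $S_i^2$ against $p-i$ while preserving a universal constant; the delicate part is checking that the balancing index stays in $\{0,\dots,p\}$ and that the crude range and averaging estimates lose only a constant factor, which is what fixes the value $443$. Everything else is bookkeeping: combining $f(\rn{n}) \le 443\,H(\eps{n})$ with the excerpt's chaining rule yields $\tfrac{1}{443}\big(f(\rn{n}) \wedge \tfrac14\big) \le H(\eps{n}) \wedge \tfrac14$, and then the lower bound $L(\eps{n}) \ge \tfrac{1}{512}\big(H(\eps{n}) \wedge \tfrac14\big)$ from Theorem~\ref{thm:lb} gives $L(\eps{n}) \ge \tfrac{1}{512\cdot 443}\big(f(\rn{n}) \wedge \tfrac14\big)$, i.e.\ the claimed constant $226816$.
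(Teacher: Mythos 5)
Your reduction of the theorem to the unsaturated prefix is correct and matches the paper's strategy exactly: the one-step recursions $f(\rn{n})=f(\rn{n-1})/(1+4f(\rn{n-1}))$ under saturation and $H(\eps{n})\ge H(\eps{n-1})/(1+H(\eps{n-1}))$ are both right, your monotone-step lemma (valid whenever $4C\ge 1$) is precisely the paper's Lemma~\ref{lem:rec}, and the final bookkeeping $512\cdot 443=226816$ is correct. The two structural facts you invoke for the unsaturated core are also true: $\epsilon_j\le\epsilon_i+8/\Lo{\eps{i}}$ is the paper's Lemma~\ref{lem:unsat-ineq}, and the monotonicity of $g_k/S_k$ follows from the no-saturation condition.

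However, there is a genuine gap: the core estimate $f(\eps{p})\le 443\,H(\eps{p})$ on the unsaturated prefix is exactly the hard part of the proof, and you do not prove it --- you reduce it to exhibiting an index $i$ with $S_i^2+(p-i)\le 1772\,S_p^2/g_p$, handle only the easy endpoint case $g_p\le 1772$, and then explicitly defer the ``intermediate-index regime'' as the main obstacle. This regime is where all the work lives (the paper's Lemmas~\ref{lem:unsat-regime} through~\ref{lem:main}). Your proposed tools are not obviously sufficient there: when $\epsilon_1$ is tiny the window $[\epsilon_1,\epsilon_1+8/\epsilon_1]$ is enormous, so the ratio $\epsilon_p/\epsilon_1$ is unbounded and a naive trade-off between $S_i^2$ and $p-i$ using only the window and prefix-average bounds does not yield a universal constant. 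The paper's resolution is to anchor the analysis at the specific index $p^\star$ characterized by $\epsilon_{p^\star}\Lo{\eps{p^\star}}\le 1 < 2\epsilon_{p^\star+1}\Lo{\eps{p^\star+1}}$ (the maximizer of $H$), derive from it the ratio bound $\epsilon_n/\epsilon_{p^\star+1}\le 34+2\epsilon_{p^\star+1}^2$ (Lemma~\ref{lem:bigbig}, with a case split on $\epsilon_{p^\star+1}\lessgtr 2$), and then combine four separate inequalities~\eqref{eq:e3}--\eqref{eq:e6} by summation in Lemma~\ref{lem:main}. None of this machinery appears in your sketch, so as written the proposal establishes the saturated-regime propagation but not the base case it propagates from.
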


The proof of Theorem~\ref{thm:opt} can be found at the end of this Section and has roughly two main parts.
Consider a privacy constraint $\eps{n}$, arranged in ascending order.
Suppose saturation first occurs at index $k+1$, i.e., $\sn{k+1}$, then in Section~\ref{sec:unsat}, we prove an algebraic result showing
$c'f(\rn{k}) \leq H(\eps{k})$, where $c'$ is independent of $k$ and $\eps{k}$.
Next, in Section~\ref{sec:Sat}, we show that if $\sn{k+1}$ and $c'f(\rn{k}) \leq H(\eps{k})$ then $c'f(\rn{k+1}) \leq H(\eps{k+1})$.
By  Lemma~\ref{lem:sat}(A), noting that saturation occurs at all the following indices, the theorem follows.

\subsection{Unsaturated Regime} \label{sec:unsat}

Consider the sequence, of length $n$, $\eps{n}$ such that no saturation occurs.
Since there is no saturation, $\rn{n} = \eps{n}$.
We prove $f(\eps{n}) \leq 443 H(\eps{n})$
for this unsaturated case. 
The result is stated in Lemma~\ref{lem:unsat-regime}.
By Definition~\ref{def:sat}, no saturation implies we have
 $\epsilon_{k-1} \leq \epsilon_{k} < 4\Lo{\eps{k-1}}f(\eps{k-1})  \ \forall \ 1 < k \leq n$. 
For the curious readers, Lemma~\ref{lem:valid} in the Appendix shows that $\epsilon_{k-1} < 4\Lo{\eps{k-1}}f(\eps{k-1})  \ \forall \ 1 < k \leq n$, thus, the above intervals are valid.

\begin{lemma} \label{lem:unsat-regime}
    If $\bigwedge_{i=2}^n (\neg \sn{i})$, then
    \begin{equation}
    \frac{1}{443}f(\eps{n}) \leq H(\eps{n}).
\end{equation}
\end{lemma}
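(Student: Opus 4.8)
The plan is to reduce the inequality to a single well-chosen term of the maximum defining $H$. Since no saturation occurs, $\rn{n}=\eps{n}$, so writing $S_i:=\Lo{\eps{i}}$, $Q_i:=\Lt{\eps{i}}^2$ and $T_i:=\frac{Q_i+8}{S_i}$, the upper bound is $f(\eps{n})=\frac{Q_n+8}{4S_n^2}=\frac{T_n}{4S_n}$, while $H(\eps{n})=\max_{i=0}^n\frac{1}{S_i^2+(n-i)}$. Hence $\frac{1}{443}f(\eps{n})\le H(\eps{n})$ is equivalent to exhibiting a single index $i\in\{0,\dots,n\}$ with
\begin{equation}
 S_i^2+(n-i)\le \frac{1772\,S_n^2}{Q_n+8}=:1772\,R.
\end{equation}
So the whole task becomes: find an index whose balanced value $g(i):=S_i^2+(n-i)$ is within a constant factor of $R=S_n^2/(Q_n+8)=S_n/T_n$.

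First I would collect what non-saturation buys us. By Definition~\ref{def:sat}, $\neg\sn{k+1}$ means $\epsilon_{k+1}<4\Lo{\rn{k}}f(\rn{k})=T_k$ for every $k$; a one-line computation of the sign of $T_k-T_{k-1}$ (whose numerator is $\epsilon_k(S_{k-1}\epsilon_k-Q_{k-1}-8)<0$ by exactly this inequality) then shows that $T_k$ is strictly decreasing, and in particular $\epsilon_k\le\epsilon_n<T_n$ for all $k$. Two consequences drive the proof: the identity $Q_n+8=T_nS_n$ (so $R=S_n/T_n$), and, by summing $\epsilon_k^2<\epsilon_kT_{k-1}$ together with monotonicity of $T$, an upper bound on $Q_n$ in terms of $S_n$ and the local $T_i$'s (e.g.\ the telescoped $Q_n+8<T_1S_n$, giving $R>S_n/T_1$). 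This upper bound is the crucial ingredient, because Cauchy--Schwarz ($S_n^2\le nQ_n$, giving $R\le n$) only bounds $Q_n$ from below and therefore points the wrong way; non-saturation is exactly what prevents $Q_n$ from being large relative to $S_n^2$ and keeps $R$ from collapsing.

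Next I would choose the index and split into cases. The natural candidate is the minimizer $i^\star$ of $g$, characterised by $S_{i^\star}^2\le n-i^\star$ and $S_{i^\star+1}^2> n-i^\star-1$, so that $g(i^\star)\le 2(n-i^\star)$ and $g(i^\star+1)\le 2S_{i^\star+1}^2$ are both comparable to the balanced value $B:=n-i^\star$. The endpoints already dispose of several regimes: $i=n$ works whenever $Q_n+8$ is bounded by a constant (then $g(n)=S_n^2\le\text{const}\cdot R$), and $i=0$ works whenever $n\lesssim R$ (then $g(0)=n\le\text{const}\cdot R$); using $R>S_n/T_1$, these two cases cover all $\epsilon_1$ bounded away from $0$. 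In the remaining interior regime I would bound $B$ by combining the power-mean inequality on the tail $\{i^\star+1,\dots,n\}$ (which lower-bounds its contribution to $Q_n$) with the non-saturation upper bound on $Q_n$ from the previous step, sandwiching $B$ between constant multiples of $R$. Tracking the constants through this case analysis is what yields the explicit factor $443$; following the paper's remark, I would not try to optimise it.

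The main obstacle is precisely the interior, strongly heterogeneous regime, in particular when $\epsilon_1$ is tiny while some later $\epsilon_k$ are large. The difficulty is that the balance conditions at $i^\star$ do \emph{not} lower-bound $\epsilon_{i^\star+1}$, so one cannot directly control the tail length $n-i^\star$ by $S_n/\epsilon_{i^\star+1}$; the only leverage is the non-saturation inequality $\epsilon_{k+1}<T_k$ with $T$ decreasing, which must be used in a localised form (with the $T_i$ at or near the chosen index rather than the global $T_1$) to obtain an upper bound on $Q_n$ strong enough to force $g(i^\star)\le 1772\,R$. Reconciling the two very differently-shaped quantities $f(\eps{n})$ and $H(\eps{n})$ through this single chain of inequalities, with all constants explicit, is the part I expect to require the most care.
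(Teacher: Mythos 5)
Your overall skeleton matches the paper's: reduce to exhibiting one index $i$ with $\Lo{\eps{i}}^2+(n-i)\le \mathrm{const}\cdot \Lo{\eps{n}}^2/(\Lt{\eps{n}}^2+8)$, dispose of the endpoint regimes, and note that non-saturation forces $\epsilon_j\le\epsilon_i+8\Lo{\eps{i}}^{-1}$ (your monotone-$T_k$ derivation is a clean equivalent of the paper's Lemma~\ref{lem:unsat-ineq}, and your telescoped bound $\Lt{\eps{n}}^2+8<T_1\Lo{\eps{n}}$ is correct). However, the interior case --- which you rightly flag as the crux --- is left with a genuine gap, and you have misdiagnosed what closes it. Two problems. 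First, your ``balance'' characterisation of $i^\star$ ($\Lo{\eps{i^\star}}^2\le n-i^\star$ and $\Lo{\eps{i^\star+1}}^2>n-i^\star-1$) is not the characterisation of the minimiser of $g(i)=\Lo{\eps{i}}^2+(n-i)$ (take all $\epsilon_i$ equal and below $1/\sqrt n$: the minimiser is $i^\star=n$, where $\Lo{\eps{n}}^2>0=n-n$). The minimiser, i.e.\ the maximiser of $H$, is the largest $p$ with $g(p)-g(p-1)=\epsilon_p(\epsilon_p+2\Lo{\eps{p-1}})-1\le 0$. Second, and decisively: that first-order condition is exactly the lower bound on $\epsilon_{i^\star+1}$ you declare unobtainable. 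Maximality of $H$ at $p$ gives $\epsilon_{p+1}(\epsilon_{p+1}+2\Lo{\eps{p}})>1$, hence $\epsilon_{p+1}>\tfrac12\Lo{\eps{p+1}}^{-1}$. The paper's Lemma~\ref{lem:bigbig} combines precisely this with $\epsilon_n\le\epsilon_{p+1}+8\Lo{\eps{p+1}}^{-1}$ to get $\epsilon_n/\epsilon_{p+1}\le 34+2\epsilon_{p+1}^2$, i.e.\ the tail entries are mutually comparable, which is what lets $(n-p)$ be absorbed (via the reverse Cauchy--Schwarz step $42^2\Lo{\ep{n}{p+1}}^2\ge(n-p)\Lt{\ep{n}{p+1}}^2$ of Lemmas~\ref{lem:inbound} and~\ref{lem:sep0}). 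Non-saturation alone, however localised, only upper-bounds the additive spread of the tail; without a lower bound on $\epsilon_{p+1}$ the ratio $\epsilon_n/\epsilon_{p+1}$ is uncontrolled and your sandwich on $n-i^\star$ does not close.

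For what it is worth, once you import that condition your sketch does go through and is a mild rearrangement of the paper's Lemma~\ref{lem:main}: power-mean gives $\Lt{\eps{n}}^2\ge(n-p)\epsilon_{p+1}^2$, the telescoped non-saturation bound gives $\Lt{\eps{n}}^2+8<T_{p+1}\Lo{\eps{n}}$, and $\epsilon_{p+1}\Lo{\eps{p+1}}>\tfrac12$ together with $\Lt{\eps{p+1}}^2\le\epsilon_{p+1}\Lo{\eps{p+1}}$ yields $T_{p+1}\le 17\epsilon_{p+1}$, whence $n-p\lesssim \Lo{\eps{n}}/\epsilon_{p+1}\lesssim \Lo{\eps{n}}^2/(\Lt{\eps{n}}^2+8)$; the term $\Lo{\eps{p}}^2$ is handled similarly from $\epsilon_p\Lo{\eps{p}}\le 1$. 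The paper instead sums four cross-term inequalities to obtain $42^2\Lo{\eps{n}}^2\ge\Lt{\eps{n}}^2(n-p+\Lo{\eps{p}}^2)$ in one stroke. As submitted, though, the proposal is missing the one ingredient that makes the interior case work.
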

\begin{proof}
     Recall that $H(\eps{n}) = \max_{i=0}^n  \frac{1}{\Lo{\eps{i}}^2 + n-i}$.
Observe that $H(\eps{n}) \geq 1/\Lo{\eps{n}}^2$.
Further, it is easy to see that the maximum in the definition of $H(\eps{n})$ occurs at index $p$ which is the largest such that $\epsilon_p(\epsilon_p +2\Lo{\eps{p-1}}) \leq 1$ (or at $p=0$). 
Thus, 
\begin{equation}
    \epsilon_p \Lo{\eps{p}} \leq 1 \label{eq:eps-n}
\end{equation}
unless $p=0$.
Note that if $0 < p < n$,
then we have $\epsilon_{p+1}(\epsilon_{p+1} +2\Lo{\eps{p}}) > 1$, which implies
\begin{equation}
    2\epsilon_{p+1}\Lo{\eps{p+1}} > 1. \label{eq:eps-n+1}
\end{equation} 

Consider the three cases regarding $p$.
\begin{description}
    \item[$\bullet$ $p=0$]: if $p=0$, then $\epsilon_1 \geq 1$ and $H(\eps{n}) = 1/n$. By Lemma~\ref{lem:sep0} below, we have $\frac{9^2}{4}H(\eps{n}) \geq \frac{\Lt{\eps{n}}^2}{4\Lo{\eps{n}}^2}$.
    Recall that $f(\eps{n}) = \frac{\Lt{\eps{n}}^2 + 8}{4\Lo{\eps{n}}^2}$.
Since, $2H(\eps{n}) \geq 8/4\Lo{\eps{n}}^2$, 
we get $23H(\eps{n}) \geq f(\eps{n})$.

    \item[$\bullet$ $p=n$]: if $p=n$, then by \eqref{eq:eps-n},
\begin{equation}
    \Lt{\eps{n}}^2 \leq \epsilon_n \Lo{\eps{n}} \leq 1. \label{eq:k=n}    
    \end{equation}
we have $f(\eps{n}) \leq \frac{9}{4\Lo{\eps{n}}^2} \leq 3H(\eps{n})$.

\item[$\bullet$ $1< p < n$]: This case reqires a more careful analysis. By Lemma~\ref{lem:main} below, we have $441H(\eps{n}) = \frac{441}{\Lo{\eps{p}}^2 + n-p} \geq \frac{\Lt{\eps{n}}^2}{4\Lo{\eps{n}}^2}$. Thus, $443H(\eps{n}) \geq f(\eps{n})$.
\end{description}

The above three cases combined prove the required identity in the unsaturated regime.
\end{proof}

We need Lemma~\ref{lem:unsat-ineq}, Lemma~\ref{lem:inbound}, Lemma~\ref{lem:sep0}, and Lemma~\ref{lem:bigbig} to prove Lemma~\ref{lem:main}.
Lemma~\ref{lem:unsat-ineq} gives an important inequality that is utilized in the lemmata that follow.

\begin{lemma} \label{lem:unsat-ineq}
    If $\bigwedge_{i=2}^n (\neg \sn{i})$, then
    \begin{equation} \label{eq:eps-ub}
    \epsilon_{j} \leq \epsilon_i + \frac{8}{\Lo{\eps{i}}} \ \forall \ n \geq  j > i \geq 1.
\end{equation}
\end{lemma}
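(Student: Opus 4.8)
The plan is to fix the index $i$ and establish $\epsilon_j \le \epsilon_i + 8/\Lo{\eps{i}}$ by induction on $j$ over the range $i < j \le n$. The only structural input I need is the description of the unsaturated regime recorded just before the lemma: since $\bigwedge_{i=2}^n(\neg\sn{i})$ forces $\rn{k}=\eps{k}$ for every $k$, non-saturation of index $k$ is precisely the strict inequality
\begin{equation}
\epsilon_k < 4\Lo{\eps{k-1}}f(\eps{k-1}) = \frac{\Lt{\eps{k-1}}^2 + 8}{\Lo{\eps{k-1}}}, \qquad 1 < k \le n.
\end{equation}
Alongside this I will use one elementary monotonicity fact: because $\bm\epsilon$ is non-decreasing, each $\epsilon_l \le \epsilon_k$ for $l \le k$, so $\Lt{\eps{k}}^2 = \sum_{l\le k}\epsilon_l^2 \le \epsilon_k\sum_{l\le k}\epsilon_l = \epsilon_k\Lo{\eps{k}}$.

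Writing $S_k := \Lo{\eps{k}}$, $Q_k := \Lt{\eps{k}}^2$ and $B := \epsilon_i + 8/S_i$ for brevity, the base case $j=i+1$ is immediate: the displayed constraint together with $Q_i \le \epsilon_i S_i$ gives $\epsilon_{i+1} < (Q_i+8)/S_i \le (\epsilon_i S_i + 8)/S_i = B$. For the inductive step I would assume $\epsilon_l \le B$ for all $i < l < j$ and bound $Q_{j-1}$ by splitting the sum at $i$: on the initial block $l \le i$ use $\epsilon_l^2 \le \epsilon_i\epsilon_l$ for a contribution of at most $\epsilon_i S_i$, and on the tail block $i < l \le j-1$ use the inductive hypothesis $\epsilon_l^2 \le B\epsilon_l$ for at most $B(S_{j-1}-S_i)$. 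Substituting $Q_{j-1} \le \epsilon_i S_i + B(S_{j-1}-S_i)$ into the constraint at index $j$ yields
\begin{equation}
\epsilon_j < \frac{\epsilon_i S_i + B(S_{j-1}-S_i) + 8}{S_{j-1}}.
\end{equation}

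The whole argument hinges on the fact that $B$ was defined so that $\epsilon_i S_i + 8 = B S_i$ exactly; this collapses the numerator to $BS_i + B(S_{j-1}-S_i) = BS_{j-1}$, giving $\epsilon_j < B$ and closing the induction. I expect this exact cancellation to be the crux. The naive alternative — telescoping the adjacent bound $\epsilon_{l+1} \le \epsilon_l + 8/S_l$ — only produces $\epsilon_j - \epsilon_i \le \sum_{l=i}^{j-1} 8/S_l$, which grows like $\log(j/i)$ and is far too weak to match the claimed fixed-denominator bound $8/S_i$. The decisive move is therefore to carry the constant target $B$ through the induction so that the tail indices $i < l < j$ are reabsorbed precisely, leaving only the boundary quantity $\epsilon_i S_i + 8$, which is pinned down by the definition of $B$. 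The remaining routine points are merely that $S_i > 0$ (so $B$ is well defined, using $\epsilon_1 > 0$) and the observation that the base case is exactly the empty-tail instance of the general step, so the two can be merged.
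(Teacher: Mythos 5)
Your proof is correct, but it takes a different route from the paper's. The paper's argument is a one-shot manipulation of a \emph{single} non-saturation constraint, the one at index $j$: writing $\epsilon_j \Lo{\eps{j-1}} \leq \Lt{\eps{j-1}}^2 + 8$ as $\sum_{m=1}^{j-1}\epsilon_m(\epsilon_j-\epsilon_m)\leq 8$, it notes that every summand is nonnegative (monotonicity gives $\epsilon_m\leq\epsilon_j$), drops the terms with $m>i$, and bounds $\epsilon_j-\epsilon_m\geq\epsilon_j-\epsilon_i$ for $m\leq i$ to get $(\epsilon_j-\epsilon_i)\Lo{\eps{i}}\leq 8$ immediately — no induction and no use of the constraints at indices strictly between $i$ and $j$. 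Your induction, by contrast, consumes all the constraints at indices $i+1,\dots,j$ and closes via the exact cancellation $\epsilon_i S_i+8=BS_i$; the steps all check out (the base case, the split of $Q_{j-1}$ at $i$, and the reabsorption of the tail into $B S_{j-1}$ are each valid, and you even get the strict inequality). Your observation that naive telescoping of adjacent bounds is too weak is accurate and is precisely why carrying the fixed target $B$ is needed in your framing; the paper sidesteps the issue entirely by never telescoping. The paper's route is shorter and reveals that the conclusion for a given pair $(i,j)$ depends only on non-saturation at index $j$; yours is more mechanical but equally sound.
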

\begin{proof}
From the unsaturation condition, get
 \begin{align}
    \epsilon_k  &\leq 4\Lo{\eps{k-1}}f(\eps{k-1}) \\
    \implies \epsilon_k  &\leq \frac{\Lt{\eps{k-1}}^2 + 8}{\Lo{\eps{k-1}}} \\
    \implies \epsilon_k \Lo{\eps{k-1}} &\leq \Lt{\eps{k-1}}^2 + 8, \\
\implies \sum\limits_{m=1}^{k-1} \epsilon_m(\epsilon_k - \epsilon_m) &\leq 8 \\
\implies \sum\limits_{m=1}^{k} \epsilon_m(\epsilon_k - \epsilon_m) &\leq 8  \ \ \forall \ 1 < k \leq n.
 \end{align}
As a consequence, for any $j > i \geq 1$, 
\begin{align}
 (\epsilon_{j} - \epsilon_i)\sum \limits_{m=1}^i \epsilon_m &\leq 8   \\
 \implies \epsilon_{j} &\leq \epsilon_i + \frac{8}{\Lo{\eps{i}}}. 
\end{align}
\end{proof}

Now we prove two lemmata that come in use to prove Lemma~\ref{lem:bigbig}.

\begin{lemma}\label{lem:inbound}
Suppose $\epsilon_n \leq C\epsilon_i$ for some $C > 0$ and $i$, then 
$$C^2 \Lt{\ep{n}{i}}^2 \geq (n-i+1)\Lt{\ep{n}{i}}^2.$$
\end{lemma}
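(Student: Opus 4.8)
The plan is to prove the claim by controlling each coordinate of $\ep{n}{i}$ individually and then summing over the $n-i+1$ coordinates. Recall that $\bm\epsilon$ is arranged in non-decreasing order, so for every index $j$ with $i \leq j \leq n$ one has the two-sided bound $\epsilon_i \leq \epsilon_j \leq \epsilon_n$. Combined with the hypothesis $\epsilon_n \leq C\epsilon_i$, this sandwiches every coordinate into the interval $[\epsilon_i, C\epsilon_i]$; in particular $\epsilon_j \geq \epsilon_n/C$ for each such $j$. This is the only structural input I expect to need.

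First I would record the elementary per-coordinate inequalities obtained by squaring $\epsilon_i \leq \epsilon_j \leq \epsilon_n$, namely $\epsilon_i^2 \leq \epsilon_j^2 \leq \epsilon_n^2$. Summing the left inequality over the $n-i+1$ indices $j \in \{i,\ldots,n\}$ gives the lower bound $\Lt{\ep{n}{i}}^2 = \sum_{j=i}^{n}\epsilon_j^2 \geq (n-i+1)\epsilon_i^2$, and summing the right inequality gives the upper bound $\Lt{\ep{n}{i}}^2 \leq (n-i+1)\epsilon_n^2$. These two bracket the squared norm by $(n-i+1)$ times an extremal coordinate, with no appeal yet to the hypothesis.

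Next I would feed in the hypothesis $\epsilon_n \leq C\epsilon_i$ to collapse the two extremal coordinates into a single one up to the factor $C^2$. On the one hand, the upper bound becomes $\Lt{\ep{n}{i}}^2 \leq (n-i+1)\epsilon_n^2 \leq C^2 (n-i+1)\epsilon_i^2$; on the other, using $\epsilon_i \geq \epsilon_n/C$ in the lower bound yields $C^2 \Lt{\ep{n}{i}}^2 \geq (n-i+1)\epsilon_n^2$. Either form is the asserted inequality (the $C^2$ simply absorbs the spread $\epsilon_n/\epsilon_i \leq C$ between the smallest and largest entries), and taken together they certify that $\Lt{\ep{n}{i}}^2$ matches $(n-i+1)$ times an extremal squared coordinate to within the factor $C^2$.

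I do not anticipate any genuine obstacle: the entire argument is a one-line consequence of the monotonicity of $\bm\epsilon$ together with the single hypothesis, so the difficulty is essentially zero. The one point meriting care is purely notational bookkeeping, namely that the block $\ep{n}{i} = (\epsilon_i,\ldots,\epsilon_n)$ contains exactly $n-i+1$ entries, so the constant multiplying the extremal square is $n-i+1$ and not $n-i$; getting this off-by-one right is what keeps the constant clean when the lemma is later invoked in the proof of Lemma~\ref{lem:bigbig}.
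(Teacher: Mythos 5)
There is a genuine gap, and it begins with the statement itself. As printed, the inequality $C^2\Lt{\ep{n}{i}}^2 \geq (n-i+1)\Lt{\ep{n}{i}}^2$ has the same quantity on both sides and is therefore equivalent to $C^2 \geq n-i+1$, which is false in general (all coordinates equal lets $C=1$ while $n-i+1$ is arbitrary). The left-hand norm is a typo for the $\ell^1$ norm: the claim actually proved by the paper and consumed downstream (Lemmas~\ref{lem:sep0}, \ref{lem:bigbig} and \ref{lem:main} all compare $\Lo{\cdot}^2$ against a count times $\Lt{\cdot}^2$) is
\begin{equation}
C^2\Lo{\ep{n}{i}}^2 \geq (n-i+1)\Lt{\ep{n}{i}}^2 .
\end{equation}
Your proposal neither flags this nor proves either version. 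The two bounds you actually derive are $\Lt{\ep{n}{i}}^2 \leq C^2(n-i+1)\epsilon_i^2$ and $C^2\Lt{\ep{n}{i}}^2 \geq (n-i+1)\epsilon_n^2$, and the sentence ``either form is the asserted inequality'' is where the argument breaks: $(n-i+1)\epsilon_n^2$ is an upper bound for $\Lt{\ep{n}{i}}^2$ itself, not for $(n-i+1)\Lt{\ep{n}{i}}^2$, so chaining your two inequalities only returns the tautology $C^2\Lt{\ep{n}{i}}^2 \geq \Lt{\ep{n}{i}}^2$. Working coordinatewise with squares can never do better, since each side then carries exactly one factor of $(n-i+1)$ and they cancel.

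The missing idea is that the $\ell^1$ norm supplies a \emph{second} factor of $(n-i+1)$. By monotonicity and the hypothesis, every coordinate of $\ep{n}{i}$ is at least $\epsilon_i \geq \epsilon_n/C$, so $\Lo{\ep{n}{i}} \geq (n-i+1)\epsilon_n/C$ and hence $C^2\Lo{\ep{n}{i}}^2 \geq (n-i+1)^2\epsilon_n^2$; combining this with your (correct) upper bound $\Lt{\ep{n}{i}}^2 \leq (n-i+1)\epsilon_n^2$ yields $C^2\Lo{\ep{n}{i}}^2 \geq (n-i+1)\Lt{\ep{n}{i}}^2$, which is exactly the paper's one-line proof. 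The ingredients you assembled (monotonicity plus the sandwich $\epsilon_i \leq \epsilon_j \leq \epsilon_n \leq C\epsilon_i$) are the right ones, but the quadratic-in-the-count gain from squaring the $\ell^1$ sum --- the entire content of the lemma --- is absent from your argument.
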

\begin{proof}
    $$C^2 \Lt{\ep{n}{i}}^2 \geq (n-i+1)^2\epsilon_n^2 \geq (n-i+1)\Lt{\ep{n}{i}}^2.$$
\end{proof}

\begin{lemma}\label{lem:sep0}
Suppose $\epsilon_i > C$ for some $C > 0$ and $i$, then 
$$(1 + 8C^{-2})^2 \Lt{\ep{n}{i}}^2 \geq (n-i+1)\Lt{\ep{n}{i}}^2.$$
\end{lemma}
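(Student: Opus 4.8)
The plan is to reduce Lemma~\ref{lem:sep0} to the already-proven Lemma~\ref{lem:inbound}. That earlier lemma converts a multiplicative control of the form $\epsilon_n \leq C'\epsilon_i$ into exactly the $\ell^1$-versus-$\ell^2$ comparison $C'^2\Lo{\ep{n}{i}}^2 \geq (n-i+1)\Lt{\ep{n}{i}}^2$. So my only real task is to show that the hypothesis $\epsilon_i > C$, together with the no-saturation assumption $\bigwedge_{i=2}^n(\neg\sn{i})$ that is in force throughout this subsection, forces $\epsilon_n \leq (1+8C^{-2})\,\epsilon_i$. Setting $C' := 1+8C^{-2} > 0$ then finishes the proof by a single application of Lemma~\ref{lem:inbound}.

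Concretely, I would first invoke Lemma~\ref{lem:unsat-ineq} with the index pair $(i,n)$ to obtain $\epsilon_n \leq \epsilon_i + 8/\Lo{\eps{i}}$. Since $\eps{i}=(\epsilon_1,\dots,\epsilon_i)$ has non-negative entries with $\epsilon_i$ as its last (hence largest) coordinate, I have the clean bound $\Lo{\eps{i}} = \sum_{m=1}^i \epsilon_m \geq \epsilon_i$, so $8/\Lo{\eps{i}} \leq 8/\epsilon_i$. Factoring out $\epsilon_i$ gives $\epsilon_n \leq \epsilon_i\bigl(1 + 8/\epsilon_i^2\bigr)$, and the hypothesis $\epsilon_i > C$ yields $8/\epsilon_i^2 < 8C^{-2}$. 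Hence $\epsilon_n \leq (1+8C^{-2})\,\epsilon_i$, which is precisely the hypothesis of Lemma~\ref{lem:inbound} with $C' = 1+8C^{-2}$, and that lemma then delivers $(1+8C^{-2})^2\Lo{\ep{n}{i}}^2 \geq (n-i+1)\Lt{\ep{n}{i}}^2$, the claimed inequality (with the $\ell^1$ norm on the left, matching the form of Lemma~\ref{lem:inbound}).

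I do not anticipate a genuine obstacle, but two points deserve care. First, the claim is not purely algebraic in $\epsilon_i$ alone: without the no-saturation hypothesis the largest level $\epsilon_n$ could be arbitrarily large relative to $\epsilon_i$ and the statement would fail, so the essential input is Lemma~\ref{lem:unsat-ineq}, whose additive gap $8/\Lo{\eps{i}}$ is exactly what we can afford. Second, matching the constant to the stated $(1+8C^{-2})^2$ is what pins down the argument: it relies on using the sharp bound $\Lo{\eps{i}}\geq\epsilon_i$ (so that the additive term is at most $8/\epsilon_i$, giving the relative perturbation $8/\epsilon_i^2 < 8C^{-2}$) rather than any looser estimate of $\Lo{\eps{i}}$. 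A quick sanity check is the intended use with $i=1$, $C=1$: there $\ep{n}{1}=\eps{n}$ has $n$ entries, the factor becomes $(1+8)^2=81$, and dividing by $4$ recovers the $\tfrac{9^2}{4}H(\eps{n}) \geq \tfrac{\Lt{\eps{n}}^2}{4\Lo{\eps{n}}^2}$ estimate used in the $p=0$ case of Lemma~\ref{lem:unsat-regime}.
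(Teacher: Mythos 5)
Your proposal is correct and follows essentially the same route as the paper: both use the no-saturation inequality $\epsilon_n \leq \epsilon_i + 8/\Lo{\eps{i}}$ from Lemma~\ref{lem:unsat-ineq} to deduce $\epsilon_n \leq (1+8C^{-2})\epsilon_i$ and then invoke Lemma~\ref{lem:inbound}, differing only in the trivial intermediate step (you bound $8/\Lo{\eps{i}} \leq 8/\epsilon_i$ and then use $\epsilon_i > C$, while the paper bounds $8/\Lo{\eps{i}} \leq 8/C \leq 8\epsilon_i C^{-2}$). You also correctly identified that the left-hand norm in the statement should be read as $\Lo{\cdot}$, consistent with how the lemma is used later.
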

\begin{proof}
    From \eqref{eq:eps-ub}, $\epsilon_n \leq \epsilon_i + \frac{8}{C} \leq \epsilon_i(1 + 8C^{-2})$. The result then follows from Lemma~\ref{lem:inbound}.
\end{proof}

Before going to Lemma~\ref{lem:main}, we state and prove another lemma.

\begin{lemma}\label{lem:bigbig}
Suppose that for some $i$, $\epsilon_{i} \Lo{\ep{i}{1}} \leq 1$ and $2\epsilon_{i+1}\Lo{\ep{1}{i+1}} > 1$, then 
$$42^2 \Lo{\ep{n}{i+1}}^2 \geq (n-i)\Lt{\ep{n}{i+1}}^2$$
\end{lemma}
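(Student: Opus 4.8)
The plan is to recognize the claimed inequality $42^2\Lo{\ep{n}{i+1}}^2 \ge (n-i)\Lt{\ep{n}{i+1}}^2$ as a \emph{reverse} Cauchy--Schwarz bound on the tail vector $\ep{n}{i+1} = (\epsilon_{i+1},\dots,\epsilon_n)$, which has exactly $n-i$ coordinates. Ordinary Cauchy--Schwarz already gives $\Lo{\ep{n}{i+1}}^2 \le (n-i)\Lt{\ep{n}{i+1}}^2$, so the content is the reverse direction, which must hold once all entries of the tail are comparable. Lemma~\ref{lem:inbound}, applied with starting index $i+1$ in place of $i$, states precisely that $\epsilon_n \le C\epsilon_{i+1}$ implies $C^2\Lo{\ep{n}{i+1}}^2 \ge (n-i)\Lt{\ep{n}{i+1}}^2$. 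Hence it suffices to prove the multiplicative ratio bound $\epsilon_n \le 42\,\epsilon_{i+1}$, reducing the whole lemma to controlling the spread of the privacy levels in the tail.

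To obtain that ratio bound I would invoke the unsaturation inequality \eqref{eq:eps-ub} from Lemma~\ref{lem:unsat-ineq}, which is available since the entire subsection operates under the no-saturation assumption $\bigwedge_{k=2}^n(\neg\sn{k})$. Taking the base index to be $i+1$ (valid whenever $n > i+1$; the case $n=i+1$ is trivial since the tail is a single coordinate) gives $\epsilon_n \le \epsilon_{i+1} + 8/\Lo{\eps{i+1}}$. The second hypothesis $2\epsilon_{i+1}\Lo{\eps{i+1}} > 1$ is exactly what converts this additive spacing bound into a multiplicative one: it yields $1/\Lo{\eps{i+1}} < 2\epsilon_{i+1}$, hence $8/\Lo{\eps{i+1}} < 16\,\epsilon_{i+1}$, so that $\epsilon_n < 17\,\epsilon_{i+1} \le 42\,\epsilon_{i+1}$. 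Plugging $C=42$ into Lemma~\ref{lem:inbound} then closes the argument.

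I expect the only real subtlety to be bookkeeping rather than any genuine difficulty. Two points need care: first, reindexing Lemma~\ref{lem:inbound} so that it is applied to the tail $\ep{n}{i+1}$ (coordinate count $n-i$) rather than the full vector; and second, recognizing which hypothesis does the work. The bound $\epsilon_n \le 17\epsilon_{i+1}$ uses only the second hypothesis $2\epsilon_{i+1}\Lo{\eps{i+1}}>1$ together with \eqref{eq:eps-ub}; the first hypothesis $\epsilon_i\Lo{\ep{i}{1}}\le 1$ is not needed for the estimate itself and is present only because this lemma is later invoked in the regime $i=p$ of Lemma~\ref{lem:unsat-regime}, where $p$ is characterized by \eqref{eq:eps-n}--\eqref{eq:eps-n+1}. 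Since $17 \le 42$, the stated constant is comfortably loose, consistent with the paper's stated indifference to sharp constants.
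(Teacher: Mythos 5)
Your proof is correct, and it reaches the key ratio bound by a cleaner route than the paper's own argument. Both proofs share the same skeleton: establish $\epsilon_n \le C\,\epsilon_{i+1}$ for a universal $C$ and feed it into Lemma~\ref{lem:inbound} applied to the tail $\ep{n}{i+1}$ (with $n-i$ coordinates). They differ in how $C$ is obtained. The paper bounds $\epsilon_n/\epsilon_{i+1}$ by replacing the denominator wholesale with $\tfrac12\Lo{\eps{i+1}}^{-1}$, arriving at $\epsilon_n/\epsilon_{i+1} < 2\bigl(8+\epsilon_{i+1}\Lo{\eps{i+1}}\bigr)$; it must then invoke the first hypothesis $\epsilon_i\Lo{\eps{i}}\le 1$ together with \eqref{eq:eps-ub} to control $\epsilon_{i+1}\Lo{\eps{i+1}}\le \epsilon_{i+1}^2+9$, and finally split into the cases $\epsilon_{i+1}\le 2$ (closed via Lemma~\ref{lem:inbound}) and $\epsilon_{i+1}>2$ (closed via Lemma~\ref{lem:sep0}). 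You instead cancel the $\epsilon_{i+1}$ in the numerator of $(\epsilon_{i+1}+8\Lo{\eps{i+1}}^{-1})/\epsilon_{i+1}$ against the denominator and apply the hypothesis $2\epsilon_{i+1}\Lo{\eps{i+1}}>1$ only to the remainder, giving $8/\Lo{\eps{i+1}}<16\epsilon_{i+1}$ and hence $\epsilon_n<17\epsilon_{i+1}$ in one line. This buys a sharper constant ($17^2$ in place of $42^2$, which could be propagated through Lemma~\ref{lem:main} to improve the constant in Theorem~\ref{thm:opt}), eliminates the case split and the appeal to Lemma~\ref{lem:sep0} entirely, and supports your correct observation that the first hypothesis $\epsilon_{i}\Lo{\ep{i}{1}}\le 1$ is not needed for this particular lemma (it is used elsewhere in the proof of Lemma~\ref{lem:main}). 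Your bookkeeping on the trivial case $n=i+1$ and on the standing no-saturation assumption required to invoke \eqref{eq:eps-ub} is also right.
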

\begin{proof}
    The second inequality in the statement of the lemma implies that 
    $$\epsilon_{i+1} > \frac{1}{2}\Lo{\epsilon_1^{i+1}}^{-1}.$$ 
    On the other hand, \eqref{eq:eps-ub} gives us $\epsilon_{n} \leq \epsilon_{i+1} + 8\Lo{\epsilon_1^{i+1}}^{-1}$. 
    Thus, 
    \begin{align}
        \frac{\epsilon_n}{\epsilon_{i+1}} &< 2\frac{\epsilon_{i+1} + 8\Lo{\epsilon_1^{i+1}}^{-1}}{\Lo{\epsilon_1^{i+1}}^{-1}} \\
        &= 2(8 + \epsilon_{i+1}\Lo{ \epsilon_1^{i+1}}). \label{eq:s0}
    \end{align}
   
    Note that 
    \begin{align}
    \epsilon_{i+1}\Lo{ \epsilon_1^{i+1}} & = \epsilon_{i+1}(\epsilon_{i+1} + \Lo{\epsilon_1^{i}}) \\
    &= \epsilon_{i+1}^2 + (\epsilon_{i+1} - \epsilon_i)\Lo{\epsilon_1^{i}} + \epsilon_i \Lo{\epsilon_1^{i}} \\ 
    & \leq \epsilon_{i+1}^2 + (\epsilon_{i+1} - \epsilon_i)\Lo{\epsilon_1^{i}}  + 1. \label{eq:s1}
    \end{align}
    From \eqref{eq:eps-ub} we know that 
    \begin{equation}
    \epsilon_{i+1} - \epsilon_i \leq 8\Lo{\epsilon_1^{i}} ^{-1}. \label{eq:s2}    
    \end{equation}
    Combining \eqref{eq:s1} and \eqref{eq:s2} into \eqref{eq:s0}, we get 
    $$ \frac{\epsilon_n}{\epsilon_{i+1}} < 2(8 + \epsilon_{i+1}^2 + 9) = 34 + 2\epsilon_{i+1}^2.$$
    Now if $\epsilon_{i+1} \leq 2$, the result follows from Lemma~\ref{lem:inbound}. Otherwise it follows from Lemma~\ref{lem:sep0}.
\end{proof}

We now prove the main lemma.

\begin{lemma}\label{lem:main}
    $$ 42^2  \Lo{\epsilon_1^n}^2 \geq  \Lt{\epsilon_1^n}^2(n - p +  \Lo{\epsilon_1^p}^2). $$
\end{lemma}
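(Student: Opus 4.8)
The plan is to reduce the stated inequality to a comparison of four explicit terms and to bound each using the structural facts already established for the unsaturated regime. Write $A = \Lo{\eps{p}}$ and $B = \Lo{\ep{n}{p+1}}$ for the $\ell^1$ masses of the ``head'' (indices $1,\ldots,p$) and ``tail'' (indices $p+1,\ldots,n$), and $S_1 = \Lt{\eps{p}}^2$, $S_2 = \Lt{\ep{n}{p+1}}^2$ for the corresponding squared $\ell^2$ norms, so that $\Lo{\eps{n}} = A+B$ and $\Lt{\eps{n}}^2 = S_1 + S_2$. Here $p$ is the index from Lemma~\ref{lem:unsat-regime}, so I may freely use \eqref{eq:eps-n} (i.e.\ $\epsilon_p A \le 1$) and \eqref{eq:eps-n+1} (i.e.\ $2\epsilon_{p+1}\Lo{\eps{p+1}} > 1$), together with the unsaturation hypothesis powering Lemma~\ref{lem:unsat-ineq} and Lemma~\ref{lem:bigbig}. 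Expanding the right-hand side,
\begin{equation}
\Lt{\eps{n}}^2(n-p+A^2) = S_2(n-p) + S_1(n-p) + S_1 A^2 + S_2 A^2,
\end{equation}
and I would bound these four pieces separately against $(A+B)^2 = \Lo{\eps{n}}^2$.

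The dominant term is $S_2(n-p)$, and this is exactly what Lemma~\ref{lem:bigbig}, applied at $i=p$ (whose hypotheses are precisely \eqref{eq:eps-n}--\eqref{eq:eps-n+1}), controls: it yields $S_2(n-p) \le 42^2 B^2 \le 42^2 (A+B)^2$. For the remaining three, I would first record the consequence $S_1 \le \epsilon_p A \le 1$ of \eqref{eq:eps-n} (using $\epsilon_m \le \epsilon_p$ for $m \le p$). Then $S_1 A^2 \le A^2$; next $S_1(n-p) \le \epsilon_p A\,(n-p) \le AB$, using $(n-p)\epsilon_p \le (n-p)\epsilon_{p+1} \le B$; and finally, via Lemma~\ref{lem:unsat-ineq} at $(i,j)=(p,n)$ one has $\epsilon_n \le \epsilon_p + 8/A$, so $S_2 \le \epsilon_n B \le (\epsilon_p + 8/A)B$ and hence $S_2 A^2 \le (\epsilon_p A^2 + 8A)B \le 9AB$ after using $\epsilon_p A^2 \le A$. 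Applying the elementary estimates $A^2, B^2 \le (A+B)^2$ and $AB \le (A+B)^2/4$ then bounds the whole right-hand side by a universal-constant multiple of $(A+B)^2$, with the $42^2$ tail contribution dominating and the other three terms lower-order, which is the claim.

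Since Lemma~\ref{lem:bigbig} is already in hand, the only obstacle specific to this lemma is arranging the decomposition so that the three non-dominant terms are provably $O((A+B)^2)$; the one needing care is $S_2 A^2$, where a naive bound on $S_2$ would reintroduce the full spread of the tail, and it is Lemma~\ref{lem:unsat-ineq} that keeps $\epsilon_n$ within $O(1/A)$ of $\epsilon_p$ and thus controls it. The genuinely analytical work — converting the lower bound \eqref{eq:eps-n+1} on $\epsilon_{p+1}$ together with the additive spread $\epsilon_n \le \epsilon_{p+1} + 8/\Lo{\eps{p+1}}$ into a bounded ratio $\epsilon_n/\epsilon_{p+1}$, splitting on whether $\epsilon_{p+1}\le 2$ to invoke Lemma~\ref{lem:inbound} or Lemma~\ref{lem:sep0} — was already discharged inside Lemma~\ref{lem:bigbig}. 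Here everything else reduces to AM--GM bookkeeping, and the slack in combining the four terms is immaterial since the paper does not optimize constants.
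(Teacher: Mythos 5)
Your proof is correct and follows essentially the same route as the paper: the identical four-term decomposition $S_2(n-p)+S_1(n-p)+S_1A^2+S_2A^2$, with Lemma~\ref{lem:bigbig} at $i=p$ giving $S_2(n-p)\le 42^2B^2$, \eqref{eq:eps-n} giving $S_1A^2\le A^2$ and $S_1(n-p)\le AB$, and Lemma~\ref{lem:unsat-ineq} giving $S_2A^2\le 9AB$ --- these are exactly the paper's \eqref{eq:e3}--\eqref{eq:e6}. The only cosmetic difference is the final absorption: to land exactly on the stated constant $42^2$ you should bound $A^2+42^2B^2+10AB\le 42^2(A+B)^2$ directly (since $1\le 42^2$ and $10\le 2\cdot 42^2$) rather than via $AB\le (A+B)^2/4$, which as written leaves a constant slightly above $42^2$.
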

\begin{proof}
    From \eqref{eq:eps-n},
    \begin{equation}
    \Lt{\epsilon_1^{p}}^2 \leq \epsilon_p \Lo{\epsilon_1^{p}} \leq 1. \label{eq:k=n}    
    \end{equation}
    Thus, 
    \begin{equation}
    \Lo{\epsilon_1^{p}}^2 \geq \Lt{\epsilon_1^{p}}^2 \cdot \Lo{\epsilon_1^{p}}^2. \label{eq:e3} 
    \end{equation}
    Applying Lemma~\ref{lem:bigbig} we also have 
    \begin{equation}42^2\Lo{\epsilon_{p+1}^{n}}^2 \geq (n-p)\Lt{ \epsilon_{p+1}^{n}}^2. \label{eq:e4} \end{equation}
    Further, observe that
    \begin{equation}
    \Lo{\epsilon_{1}^{p}} \cdot \Lo{\epsilon_{p+1}^{n}} \geq (n-p)\Lt{\epsilon_{1}^{p}}^2. \label{eq:e5} 
    \end{equation} 
    since $\{\epsilon_i\}$ is non-decreasing.
    From \eqref{eq:eps-ub}, note that for all $i > p$ we have 
    $$ \epsilon_i \Lo{\epsilon_{1}^{p}} \leq \epsilon_p \Lo{\epsilon_{1}^{p}} + 8 \leq 9.$$
    Multiplying by $\epsilon_i$ and adding over all $i > p$ we get
    \begin{align}
    9\Lo{\epsilon_{p+1}^{n}} &\geq \Lt{\epsilon_{p+1}^{n}}^2 \cdot \Lo{\epsilon_{1}^{p}} \\
    \implies 9\Lo{\epsilon_{p+1}^{n}}\Lo{\eps{p}} &\geq \Lt{\epsilon_{p+1}^{n}}^2 \cdot \Lo{\epsilon_{1}^{p}}^2. \label{eq:e6}
    \end{align}
    
    From \eqref{eq:e3}, \eqref{eq:e4}, \eqref{eq:e5} and \eqref{eq:e6} (add them and upper bound the upper bound), we get 
    $$ 42^2 \Lo{\epsilon_1^n}^2 \geq \Lt{\epsilon_1^n \rVert}^2(n - p + \Lo{\epsilon_1^p}^2).$$
\end{proof}

\subsection{Saturated Regime} \label{sec:Sat}

In the saturated regime, we show that if $\sn{k+1}$ occurs and we have
$H(\eps{k}) \geq \frac{1}{443} f(\rn{k})$, then $H(\eps{k+1}) \geq \frac{1}{443} f(\rn{k+1})$.

\begin{lemma} \label{lem:rec}
If $\sn{k+1}$ and $H(\eps{k}) \geq \frac{1}{443} f(\rn{k})$, then $H(\eps{k+1}) \geq \frac{1}{443} f(\rn{k+1})$.
\end{lemma}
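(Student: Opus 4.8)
The plan is to reduce the recursion to an elementary one-variable comparison by exploiting the closed-form update for $f$ under saturation together with a matching ``$+1$ in the denominator'' update for $H$. The constant $443$ will then propagate automatically.

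First, I would invoke Lemma~\ref{lem:sat}(B): since $\sn{k+1}$ occurs, $f(\rn{k+1}) = f(\rn{k})/(1 + 4f(\rn{k}))$. This is the only place the saturation hypothesis enters, and it is essential: it converts the otherwise opaque dependence of $f(\rn{k+1})$ on the new privacy level $\epsilon_{k+1}$ into a clean increasing function of $f(\rn{k})$ alone.

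Second, I would establish the companion bound $H(\eps{k+1}) \geq H(\eps{k})/(1 + H(\eps{k}))$. To do this, let $p \in \{0,\ldots,k\}$ be any index attaining the maximum in $H(\eps{k}) = \max_{i=0}^{k} 1/(\Lo{\eps{i}}^2 + k - i)$, so that $H(\eps{k}) = 1/D$ with $D = \Lo{\eps{p}}^2 + k - p$. The same index $p$ is admissible in the maximization defining $H(\eps{k+1})$, and at that index the denominator increases by exactly one (from $D$ to $D+1$), since $(k+1) - p = (k-p)+1$ while $\Lo{\eps{p}}$ is unchanged. Hence $H(\eps{k+1}) \geq 1/(D+1) = H(\eps{k})/(1 + H(\eps{k}))$. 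This step is uniform in $p$ and in particular covers the corner case $p=0$, where $\Lo{\eps{0}} = 0$ and both sides specialize to $1/(k+1)$.

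Third, writing $a = f(\rn{k})$ and $h = H(\eps{k})$, the hypothesis reads $a \leq 443\,h$ and the goal $f(\rn{k+1}) \leq 443\,H(\eps{k+1})$ reduces, via the two updates above, to showing $a/(1 + 4a) \leq 443\,h/(1 + h)$. Cross-multiplying (all quantities are nonnegative, so the inequality direction is preserved) gives $a(1+h) \leq 443\,h(1 + 4a)$, i.e.\ $a \leq 443\,h + 1771\,h a$, which is immediate from $a \leq 443\,h$ because the extra term $1771\,h a$ is nonnegative. I do not anticipate a genuine obstacle: the real content is recognizing that both $f$ and $H$ obey a fractional-linear update of the form $t \mapsto t/(1 + ct)$, after which the chosen constant survives with room to spare (the factor $4$ in the $f$-update only helps). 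The only points needing care are the corner index $p=0$ in the second step, handled above, and confirming that all terms are nonnegative before cross-multiplying.
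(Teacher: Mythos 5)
Your proposal is correct and follows essentially the same route as the paper's proof: both use Lemma~\ref{lem:sat}(B) to get $f(\rn{k+1}) = f(\rn{k})/(1+4f(\rn{k}))$, both observe that the maximizing index $p$ for $H(\eps{k})$ yields $H(\eps{k+1}) \geq H(\eps{k})/(1+H(\eps{k}))$, and both close with the same elementary algebra (the paper phrases it via monotonicity of $x/(1+x)$ rather than cross-multiplication, but the content is identical). Your explicit treatment of the $p=0$ corner case is a small bonus of care the paper leaves implicit.
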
 
\begin{proof}
Suppose at index $p$, $H(\eps{k})$ was maximized, i.e.,
$H(\eps{k}) = \frac{1}{\Lo{\eps{p}}^2 + k-p}$.
Then,
\begin{align}
  H(\eps{k+1}) &\geq  \frac{1}{\Lo{\eps{p}}^2 + k+1-p}  \\
  &= \frac{H(\eps{k})}{1+H(\eps{k})}
\end{align}
Noting that $x/(1+x)$ is increasing function in $x$ and $H(\eps{k})$ is lower bounded by $\frac{1}{443} f(\rn{k})$, we have,
\begin{align}
H(\eps{k+1}) &\geq  \frac{ f(\rn{k})}{443+f(\rn{k})}  \\
 &\geq  \frac{1}{443} \frac{f(\rn{k})}{1+4 f(\rn{k})} \\
 &= \frac{1}{443} f(\rn{k+1})
\end{align}
where we used $f(\rn{k+1}) =\frac{f(\rn{k})}{1+4 f(\rn{k})}$ due to $\sn{k+1}$ (see Lemma~\ref{lem:sat}(B)).
\end{proof}

\begin{proof}[\textbf{Proof of Theorem~\ref{thm:opt}}]
Lemma~\ref{lem:unsat-regime} and Lemma~\ref{lem:rec} together prove Theorem~\ref{thm:opt}:
 suppose for $\eps{n}$, $k$ is the least index at which $\sn{k+1}$ occurs, then by Lemma~\ref{lem:unsat-regime}, since $\eps{k}$ is an unsaturated sequence, $H(\eps{k}) \geq \frac{1}{443}f(\rn{k})$.
By Lemma~\ref{lem:sat}(A), all indices from $k$ onward are saturated, so applying Lemma~\ref{lem:rec} $n-k$ times for all subsequent indices result in $H(\eps{n}) \geq \frac{1}{443}f(\rn{n})$.    
\end{proof}
\section{Experiments} \label{sec:Exp}

\subsection{Baseline Schemes}

We first describe several  baseline DP techniques and discuss why they are not optimal in HDP.  
Supporting experiments follow.

\textbf{Uniformly enforce $\epsilon_1$-DP (UNI):} One approach to this problem is to offer $\epsilon_1$ (the lowest value in $\bm\epsilon$) privacy to all the datapoints and use the minimax estimator, i.e., sample mean added with Laplace noise, to get an error of $O(1/n + 1/(n\epsilon_1)^2)$. 
UNI can be arbitrarily worse than the ADPM. 
Consider the case when only 1 datapoint has an extremely stringent privacy requirement while rest of the data is public. 

\textbf{Sampling Mechanism (SM) \cite{Jorg15}:} 
Let $t = \Lnorm{\bm\epsilon}{\infty}$, then sample the $i$-th datapoint independently with probability $(e^{\epsilon_i}-1)/(e^t - 1)$ and apply any homogeneous $t$-DP algorithm on the sub-sampled dataset.
\cite{Jorg15} proved this mechanism is $\bm\epsilon$-DP.
For our case, take the sample mean of the sub-sampled dataset and add Laplace noise with variance $2/(N_s t)^2$, where $N_s$ is the realization of the number of datapoints sub-sampled.
However, this approach can be easily seen to be suboptimal.
When one datapoint is public, the SM algorithm disregards rest of the private data.

\textbf{Local Differential Private Estimator (LDPE):}  
For lack of a good estimator for HDP, we also consider a Local-DP estimator.
Each user with privacy requirement $\epsilon_i$ adds Laplace($1/\epsilon_i$) noise to their data and sends it to the central server.
The central server produces a weighted combination of the noisy data as the estimator.
We can take optimal linear combinations of these noisy values to minimize the mean squared error if the variance of the unknown distribution is known (see \cite{Ferrando21} for details).
We take the worst case variance as a proxy in our problem setting.
In general, Local-DP setting adds more noise to get privacy from the central serve - this is a known shortcoming of the Local-DP model so we should not expect this LDPE model to be on par with the other baseline techniques.

\textbf{ Fallah et al.'s mean estimator (FME) \cite{Asu22}:} For brevity, we direct the readers to \cite[Theorem 1]{Asu22} for details on the algorithm.
We refer to this algorithm as FME in the rest of this work.
One of the shortcomings of this method is it assumes $\Lnorm{\bm\epsilon}{\infty} \leq 1$ for its theoretical guarantees.
For our experiments, we still use this algorithm as it is stated for $\Lnorm{\bm\epsilon}{\infty} > 1$.
Even when $\Lnorm{\bm\epsilon}{\infty}  \leq  1$, FME may perform orders of magnitude worse than ADPM (see Table~\ref{tab:het-dp}). 
Further, if the example in Remark~\ref{rem:prop} is considered, then FME obtains worst-case MSE of the order $10^{-3}$ as compared to ADPM's MSE of $10^{-4}$.

\textbf{Proportional DP (PropDPM):} We refer to the affine estimator with weights proportional to the $\bm\epsilon$ vector and appropriate Laplace noise as PropDPM.
This estimator also suffers from the problem of disregarding private data if even one of the datapoints is public, as pointed out in Remark~\ref{rem:prop}.

\subsection{Experiments}

We run two types of experiments for comparing ADPM to the other algorithms under heterogeneous DP constraints.
We consider two cases for $\bm\epsilon$ of dimension  $n=10^3$: high variance and low variance in $\bm\epsilon$.
The low variance case is obtained by uniformly sampling $\log \bm\epsilon$ in $[-3,-2]$.
Independently, the high variance case corresponds to sampling  $\log \bm\epsilon$ in $[-4,2]$. 
Keeping the sampled $\bm\epsilon$ fixed, the average of the squared errors was taken over 20K simulations under Beta$(2,3)$ distribution on $\cX$. 
The results are presented in Table~\ref{tab:het-dp}.
It is not surprising that UNI, PropDPM and ADPM enjoy similar  performance in the low variance regime, while they diverge in the higher variance regime.
LDPE performing poorly in the low variance regime is also not surprising as explained earlier.
However, in high variance regime, LDPE performs decently.
It should be noted that in the low variance regime, the realization of $\bm\epsilon$ satisfied $\Li{\bm\epsilon} \leq 1$, the condition required in FME. 
However, FME is still 2 orders of magnitude worse than ADPM.

\begin{table}
\caption{Comparison of MSE of six methods for high and low variance in $\bm\epsilon$.}
\label{tab:het-dp}
\centering
\begin{tabular}{|c|c|c|} \hline
\textbf{Method} &
\begin{tabular}[c]{@{}c@{}} \boldmath$\log$ \textbf{MSE} \\ High $\mathsf{Var}(\bm\epsilon)$\end{tabular} &
\begin{tabular}[c]{@{}c@{}} \boldmath$\log$ \textbf{MSE} \\ Low $\mathsf{Var}(\bm\epsilon)$\end{tabular} \\ \hline 
ADPM & -\textbf{9}.\textbf{3} & -\textbf{8}.\textbf{1} \\  \hline
PropDPM & -9.0 & -8.1 \\ \hline
LDPE & -7.2  & -1.3\\ \hline
SM & -6.5 & -7.9\\ \hline
FME & -6.2 & -6.2 \\ \hline
UNI & -5.1 & -7.1 \\ \hline
\end{tabular}
\end{table}

\subsection{On Lemma~\ref{lem:unsat-regime}} \label{sec:Exp-Lem}

In order to check the tightness of the constant derived in Lemma~\ref{lem:unsat-regime}, we scatter plot various values of $\log H(\eps{n})$ and $\log f(\rn{n})$ in Figure~\ref{fig:scat}.
$n$ is randomly chosen, and $\epsilon_{i+1}$ is sampled from $ [\epsilon_i,4
\Lo{\eps{i}}f(\eps{i}))$ for $1 < i \leq n$.
The resulting values of $\log H(\eps{n})$ and $\log f(\rn{n})$ is shown in a scatter-plot.  
The figure also includes lines showing $y = x - \log c$ for $c \in \{4,443\}$.
While Figure~\ref{fig:scat} is not conclusive, it empirically suggests the constant $443$ in  Lemma~\ref{lem:unsat-regime} can  be improved to $4$.

\begin{figure}
    \centering
    \includegraphics{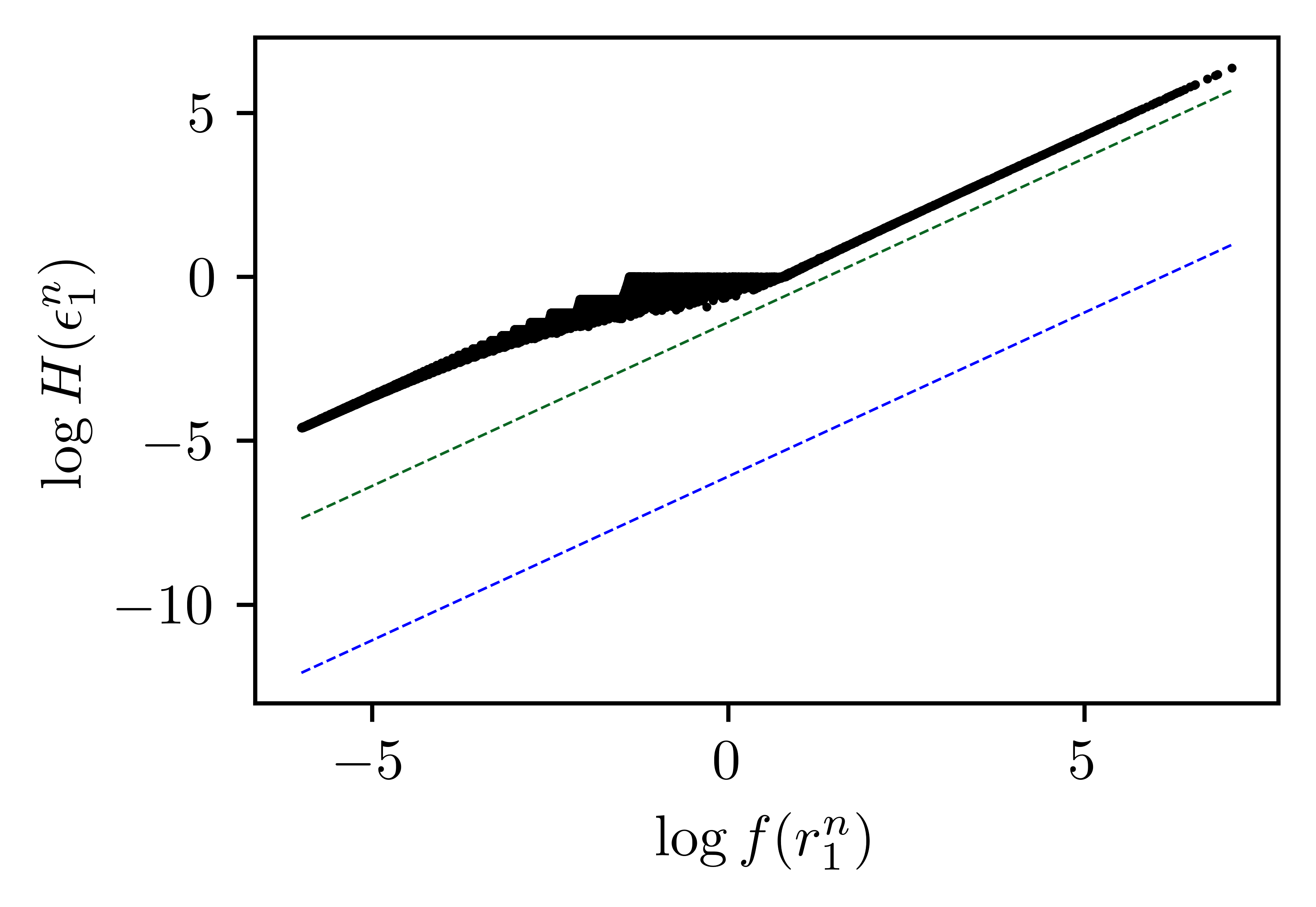}
    \caption{Scatter Plot of $\log H(\eps{n})$ vs $\log f(\rn{n})$. The dotted green line plots $y = x - \log 4$ and the dotted blue line plots $y = x - \log 443$.}
    \label{fig:scat}
\end{figure}

\section{Conclusion} \label{sec:Con}

Bounded univariate mean estimation under heterogeneous privacy constraints is studied under the Central-DP model.
We propose an efficient algorithm (ADPM) and prove the minimax optimality of ADPM up to constant factors.
In order to do this, we provide a recursive solution to an optimization problem and a lower bound.
We further show the order-equivalence of the lower and the upper bound.
Experimentally, we confirm the superior performance of our algorithm.
Further research directions can include studying the problem for sub-Gaussian random variables and the multivariate case.

\section*{Acknowledgments}
The authors thank Yigit Efe Erginbas, and Justin Singh Kang for their valuable insights.

\appendix
\section{Lemmas} \label{ap:A}
Lemma~\ref{lem:eps-af} is included for completeness (see \cite{Dwork14Alg,Asu22}).
\begin{lemma}[Laplace Mechanism] \label{lem:eps-af}
The affine estimator $M_{\vecw}(\vecx) = \inprod{\vecx}{\vecw} + L(\eta)$ is $(\vecw/\eta)$-DP when $\cX = [-0.5,0.5]$.
\end{lemma}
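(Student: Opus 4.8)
The plan is to apply the standard pointwise density-ratio argument for the Laplace mechanism, carried out one coordinate at a time. First I would write the output as a location-shifted Laplace variable: setting $a = \inprod{\vecx}{\vecw}$, the random variable $M_{\vecw}(\vecx) = \inprod{\vecx}{\vecw} + L(\eta)$ has density $p_{\vecx}(y) = \frac{1}{2\eta}e^{-|y-a|/\eta}$ on $\bbR$, and for a neighbor $\vecx'^i$ differing from $\vecx$ only in coordinate $i$ the corresponding output density is $p_{\vecx'^i}(y) = \frac{1}{2\eta}e^{-|y-a'|/\eta}$ with shift $a' = \inprod{\vecx'^i}{\vecw}$.

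Next I would bound the per-coordinate sensitivity. Since $\vecx$ and $\vecx'^i$ agree in every entry except the $i$-th, we have $a - a' = w_i(x_i - x_i')$; because $\cX = [-0.5,0.5]$ forces $|x_i - x_i'| \leq 1$ and the weights are non-negative, this gives $|a - a'| \leq w_i$. The reverse triangle inequality then yields, for every $y$, the bound $|y - a'| - |y - a| \leq |a - a'| \leq w_i$, and hence the pointwise density-ratio bound $p_{\vecx}(y)/p_{\vecx'^i}(y) = e^{(|y-a'| - |y-a|)/\eta} \leq e^{w_i/\eta}$.

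Finally I would integrate this pointwise inequality over an arbitrary measurable set $S$: $\pr{M_{\vecw}(\vecx) \in S} = \int_S p_{\vecx}(y)\,dy \leq e^{w_i/\eta}\int_S p_{\vecx'^i}(y)\,dy = e^{w_i/\eta}\,\pr{M_{\vecw}(\vecx'^i) \in S}$. As this argument is uniform in $i$, it holds for every coordinate, which is exactly the requirement that $M_{\vecw}$ be $(\vecw/\eta)$-DP in the sense of Definition~\ref{def:epsDP}.

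There is no genuine obstacle here, as this is the textbook Laplace-mechanism guarantee; the statement is included only for completeness. The two points that require a moment of care are (i) using the bounded domain $\cX = [-0.5,0.5]$ together with non-negativity of the weights to pin the $i$-th sensitivity at exactly $w_i$ (rather than a larger value), and (ii) observing that the $\bm\epsilon$-DP condition must be checked coordinatewise, which is immediate since the shift-difference $a - a'$ isolates the single coordinate $i$ in which the neighboring datasets differ.
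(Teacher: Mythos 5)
Your proposal is correct and follows essentially the same route as the paper's proof: a pointwise density-ratio bound for the Laplace output, with the $i$-th sensitivity bounded by $w_i$ via the bounded domain $\cX = [-0.5,0.5]$. The only cosmetic difference is that you explicitly integrate the pointwise bound over a measurable set $S$, whereas the paper stops at the density ratio.
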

\begin{proof}
We verify this by comparing the density of output of the algorithm on neighboring datasets. We also drop the subscript for the algorithm $M_{\vecw}$.

\begin{align}
    \frac{p(M(\vecx) = s)}{p(M(\vecx'^i) = s)} &=  \frac{\exp\{-|\inprod{\vecx}{\vecw}-s|/\eta\}}{\exp\{-|\inprod{\vecx'^i}{\vecw} - s|/\eta\}}, \\
    &\leq \exp\{|\inprod{\vecx'^i}{\vecw}-\inprod{\vecx}{\vecw}|/\eta\}, \\
    &\leq \exp\{w_i/\eta\} \label{eq:Lap},
\end{align}
where \eqref{eq:Lap} follows from the fact that $\inprod{\vecx'^i}{\vecw}$ and $\inprod{\vecx}{\vecw}$ can differ by at most $w_i$ since $w_i$ is the coefficient of the $i$-th element and $x_i,x'^i_i \in [-0.5,0.5]$.
\end{proof}

\begin{lemma} \label{lem:DP-imply}
By the definition of $\bm\epsilon$-DP in \eqref{eq:DP-def}, it follows that for all measurable sets $S \subseteq \cY$,
\begin{align} \label{eq:DP-corr}
    e^{-\epsilon_i} \pr{M(\vecx'^i) & \in S}
     \leq \pr{M(\vecx) \in S} \\
     &\leq \begin{cases}
    e^{\epsilon_i} \pr{M(\vecx'^i) \in S}\\
    1-e^{-\epsilon_i} + e^{-\epsilon_i}\pr{M(\vecx'^i) \in S} \end{cases}
\end{align}
Further, it follows that
\begin{equation}
   | \pr{M(\vecx) \in S} - \pr{M(\vecx'^i) \in S} | \leq 1-e^{-\epsilon_i} \label{eq:Delta-Bound} 
\end{equation}
\begin{proof}
Note that the DP definition also implies $e^{-\epsilon_i} \pr{M(\vecx'^i) \in S}
     \leq \pr{M(\vecx) \in S}$.
By applying the definition with $S^C$ and combining the conditions, one obtains
\[
 \begin{rcases}
    e^{-\epsilon_i} \pr{M(\vecx'^i) \in S}\\
    1-e^{\epsilon_i} +e^{\epsilon_i}\pr{M(\vecx'^i) \in S}
    \end{rcases}
     \leq \pr{M(\vecx) \in S}  \]
and 
\[ \pr{M(\vecx) \in S} 
\leq \begin{cases}
    e^{\epsilon_i} \pr{M(\vecx'^i) \in S}\\
    1-e^{-\epsilon_i} + e^{-\epsilon_i}\pr{M(\vecx'^i) \in S}
\end{cases}\]
The condition $1-e^{\epsilon_i} +e^{\epsilon_i}\pr{M(\vecx'^i) \in S} \leq \pr{M(\vecx) \in S}$ can be removed since $e^{-\epsilon_i} \lambda \geq 1-e^{\epsilon_i} +e^{\epsilon_i}\lambda$ for all $0 \leq \lambda \leq 1$ and non-negative $\epsilon_i$. 
The Lemma gives a much stronger bound than the straightforward DP definition when $\epsilon_i$ is large. Using \eqref{eq:DP-corr}, one can obtain \eqref{eq:Delta-Bound}.
\end{proof}
\end{lemma}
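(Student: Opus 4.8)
The plan is to exploit the symmetry of the neighboring relation together with an application of the defining inequality to complementary events. Writing $a = \pr{M(\vecx) \in S}$ and $b = \pr{M(\vecx'^i) \in S}$ for brevity, I first note that since $\vecx$ and $\vecx'^i$ differ only in coordinate $i$, each is a neighbor of the other, so Definition~\ref{def:epsDP} may be invoked in both orientations. This immediately yields $a \leq e^{\epsilon_i} b$ (the first case of the right-hand side of \eqref{eq:DP-corr}) and $b \leq e^{\epsilon_i} a$, the latter rearranging to the left inequality $e^{-\epsilon_i} b \leq a$.

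Next I would apply the definition to the complementary event $S^C$, using that $\pr{M(\cdot)\in S^C} = 1 - \pr{M(\cdot)\in S}$. Taking the orientation $\pr{M(\vecx'^i)\in S^C} \leq e^{\epsilon_i}\pr{M(\vecx)\in S^C}$ gives $1 - b \leq e^{\epsilon_i}(1-a)$, and solving for $a$ produces the second case $a \leq 1 - e^{-\epsilon_i} + e^{-\epsilon_i} b$. This complementary-set bound is the crucial one: unlike $e^{\epsilon_i} b$, which diverges as $\epsilon_i \to \infty$, the quantity $1 - e^{-\epsilon_i} + e^{-\epsilon_i} b$ remains in $[0,1]$ and saturates at $1$. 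That graceful degradation is precisely what makes the bound informative for large privacy parameters and is what ultimately lets the lower-bound argument in Theorem~\ref{thm:lb} accommodate public data ($\epsilon_i \to \infty$). For the right-hand side of \eqref{eq:DP-corr} one then simply reports the minimum of the two upper bounds; the redundant lower bound arising from the other orientation on $S^C$ may be discarded, as the stated lower bound $e^{-\epsilon_i}b$ already holds.

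For the total-variation-type estimate \eqref{eq:Delta-Bound}, I would factor the stronger upper bound: from $a \leq 1 - e^{-\epsilon_i} + e^{-\epsilon_i} b$ one obtains $a - b \leq (1 - e^{-\epsilon_i})(1 - b) \leq 1 - e^{-\epsilon_i}$, using $0 \leq b \leq 1$. Interchanging the roles of $a$ and $b$ (again by neighbor symmetry) gives $b - a \leq 1 - e^{-\epsilon_i}$, and combining the two yields $|a - b| \leq 1 - e^{-\epsilon_i}$. The only genuinely nontrivial step here is recognizing that one should apply the $\bm\epsilon$-DP inequality to $S^C$ rather than to $S$; once that observation is made, the remainder is elementary algebra, and the payoff that justifies isolating this Lemma is the replacement of the explosive factor $e^{\epsilon_i} - 1$ by the bounded quantity $1 - e^{-\epsilon_i}$.
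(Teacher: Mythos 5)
Your argument is correct and follows essentially the same route as the paper: invoke the DP inequality in both orientations for the direct bounds, apply it to the complementary event $S^C$ to obtain the saturating upper bound $1-e^{-\epsilon_i}+e^{-\epsilon_i}\,\pr{M(\vecx'^i)\in S}$, discard the redundant lower bound, and deduce \eqref{eq:Delta-Bound} by symmetrizing. Your explicit factorization $a-b\leq(1-e^{-\epsilon_i})(1-b)$ merely fills in the algebra the paper leaves implicit in its final sentence.
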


\begin{lemma} \label{lem:TV-bound}
For any $k \in \{0,1,\ldots,n\}$,
\begin{align}
    \Lnorm{Q_1 - Q_2}{TV} \leq 2\Lnorm{P_1 & - P_2}{TV} \sum_{i=1}^k(1-e^{-\epsilon_i}) \\ 
    &+ \sqrt{\frac{n-k}{2}\kl{P_1}{P_2}}.
\end{align}
 
\begin{proof}
We use a method similar to \cite{Asu22} to prove this but obtain stronger results due to Lemma~\ref{lem:DP-imply}. Let $\tilde Q$ be the distribution of the output of $\bm\epsilon$-DP estimator $M(\cdot)$ when the input is dataset $\vecX$ draw from the product distribution $P_1^kP_2^{n-k}$. 
By triangle inequality,
\begin{equation}
   \Lnorm{Q_1 - Q_2}{TV} \leq \Lnorm{Q_1 - \tilde Q}{TV} + \Lnorm{\tilde Q - Q_2}{TV}. 
\end{equation}

By Data Processing Inequality and Pinsker's inequality, 
\begin{equation}
   \Lnorm{Q_1 - \tilde Q}{TV} \leq  \Lnorm{P_1^n - P_1^kP_2^{n-k}}{TV} \leq \sqrt{\frac{n-k}{2}\kl{P_1}{P_2}}. 
\end{equation}

We remind the readers that in DP-literature, the expression $\pr{M(\vecX) \in A}$ generally refers to the conditional probability $\pr{M(\vecX) \in A|\vecX}$ as pointed out in Definition~\ref{def:epsDP}.
With this convention, consider the term $\Lnorm{\tilde Q - Q_2}{TV}$,
\begin{align}
    |\tilde Q(A) &- Q_2(A)| =  | \bbE_{\vecX \sim P_1^kP_2^{n-k}}\pr{M(\vecX) \in A}  \\
    &\quad \quad \quad \quad \quad - \bbE_{\vecX \sim P_2^{n}}\pr{M(\vecX) \in A} | \\
    &= \Big| \sum_{i=1}^k \big( \bbE_{\vecX \sim P_1^iP_2^{n-i}}\pr{M(\vecX) \in A} \\ &\quad \quad - \bbE_{\vecX \sim P_1^{i-1}P_2^{n-i+1}}\pr{M(\vecX) \in A} \big) \Big| \\
    &\leq \sum_{i=1}^k \big|\bbE_{\vecX \sim P_1^iP_2^{n-i}}\pr{M(\vecX) \in A} \\ 
    &\quad \quad \quad - \bbE_{\vecX \sim P_1^{i-1}P_2^{n-i+1}}\pr{M(\vecX) \in A} \big| .
\end{align}
Let $\vecX'^i$ denote another dataset which differs from $\vecX$ at only the $i$-th index and this index can have any arbitrary value independent of $\vecX$. Then, note that 
\begin{align}
\bbE_{\vecX \sim P_1^iP_2^{n-i}}&\pr{M(\vecX'^i) \in A} \\
&- \bbE_{\vecX \sim P_1^{i-1}P_2^{n-i+1}}\pr{M(\vecX'^i) \in A} = 0 \ \ \ a.s.
\end{align}
Thus,
\begin{align}
    &|\tilde Q(A) - Q_2(A)| \leq  \\
    &\sum_{i=1}^k \Big|\bbE_{\vecX \sim P_1^iP_2^{n-i}}\big[\pr{M(\vecX) \in A}-\pr{M(\vecX'^i) \in A}\big]\\
    &-\bbE_{\vecX \sim P_1^{i-1}P_2^{n-i+1}}[\pr{M(\vecX) \in A}-\pr{M(\vecX'^i) \in A}]  \Big|. \label{eq:ExpDiff}
\end{align}
Let $\vecX_{-i}$ denote the random vector $\vecX$ except at the $i$-th position, i.e., $\vecX_{-i} \sim P_1^{i-1}P_2^{n-i}$ means that elements of $\vecX$ from position $1$ to position $i-1$ are iid $P_1$ while those in positions $i+1$ to $n$ are iid $P_2$. 

Therefore, we get,
\begin{align}
    &|\tilde Q(A) - Q_2(A)| \leq \\
    &\sum_{i=1}^k \Bigg|\bbE_{\vecX_{-i} \sim P_1^{i-1}P_2^{n-i}}\Big[\bbE_{X_i \sim P_1}[\pr{M(\vecX) \in A} \\
    &\hspace{130pt}  -\pr{M(\vecX'^i) \in A}]  \\
     &\quad\quad -  \bbE_{X_i \sim P_2}[\pr{M(\vecX) \in A}-\pr{M(\vecX'^i) \in A}]\Big]  \Bigg| \\ 
    &\leq  \sum_{i=1}^k \bbE_{\vecX_{-i} \sim P_1^{i-1}P_2^{n-i}}\Big| \bbE_{X_i \sim P_1}[\pr{M(\vecX) \in A}\\ 
    &\hspace{130pt} -\pr{M(\vecX'^i) \in A}]  \\
     &\quad \quad - \bbE_{X_i \sim P_2}[\pr{M(\vecX) \in A}-\pr{M(\vecX'^i) \in A}]\Big| \\
    &\leq  \sum_{i=1}^k \bbE_{\vecX_{-i} \sim P_1^{i-1}P_2^{n-i}}[ 2(1-e^{-\epsilon_i})\Lnorm{P_1 - P_2}{TV}]  \label{eq:Jump} \\
    &= 2\Lnorm{P_1 - P_2}{TV}\sum_{i=1}^k(1-e^{-\epsilon_i}).
\end{align}

In \eqref{eq:Jump}, we used Lemma~\ref{lem:DP-imply} and the fact that for a bounded function $|f(x)| \leq C$, we have $\left|E_{X\sim P_1}[f(X)] - E_{X\sim P_2}[f(X)] \right| \leq 2\Lnorm{P_1 - P_2}{TV}C$.

\end{proof}
\end{lemma}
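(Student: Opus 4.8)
The plan is to control $\Lnorm{Q_1 - Q_2}{TV}$ by a hybrid argument that cleanly separates the contribution of the first $k$ coordinates (governed by privacy) from that of the remaining $n-k$ coordinates (governed by statistical closeness of $P_1,P_2$). First I would introduce the intermediate output law $\tilde Q$, defined as the distribution of $M(\vecX)$ when $\vecX \sim P_1^k P_2^{n-k}$, and apply the triangle inequality $\Lnorm{Q_1 - Q_2}{TV} \leq \Lnorm{Q_1 - \tilde Q}{TV} + \Lnorm{\tilde Q - Q_2}{TV}$, handling the two summands by entirely different mechanisms.

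For $\Lnorm{Q_1 - \tilde Q}{TV}$, the two input distributions $P_1^n$ and $P_1^k P_2^{n-k}$ agree on the first $k$ coordinates and differ only on the last $n-k$. Since $M$ is a randomized post-processing of its input, the data processing inequality gives $\Lnorm{Q_1 - \tilde Q}{TV} \leq \Lnorm{P_1^n - P_1^k P_2^{n-k}}{TV}$, and Pinsker's inequality followed by tensorization of the KL divergence across the $n-k$ independent differing coordinates yields exactly $\sqrt{\tfrac{n-k}{2}\kl{P_1}{P_2}}$. This produces the non-private half of the bound with no further effort.

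The heart of the proof—and what I expect to be the main obstacle—is bounding $\Lnorm{\tilde Q - Q_2}{TV}$ by the privacy-dependent term $2\Lnorm{P_1-P_2}{TV}\sum_{i=1}^k(1-e^{-\epsilon_i})$. Here I would telescope over the first $k$ coordinates, expressing the difference as a sum over $i \in [k]$ of the gap between $M$ evaluated on $P_1^i P_2^{n-i}$ and on $P_1^{i-1} P_2^{n-i+1}$, which differ only in coordinate $i$. The crucial device is a \emph{phantom} dataset $\vecX'^i$, obtained by resampling the $i$-th coordinate to an arbitrary value drawn independently of $\vecX$: conditioned on $\vecX_{-i}$, the quantity $\bbE\,\pr{M(\vecX'^i)\in A}$ is the same whether $X_i \sim P_1$ or $X_i \sim P_2$, so subtracting it inside each expectation is free, yet it replaces the bare output probability by the \emph{difference} $\pr{M(\vecX)\in A} - \pr{M(\vecX'^i)\in A}$ on neighboring datasets. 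By Lemma~\ref{lem:DP-imply} this difference is bounded uniformly by $1-e^{-\epsilon_i}$, so I may regard it as a bounded function of $X_i$ with constant $C = 1-e^{-\epsilon_i}$; the elementary inequality $|\bbE_{P_1}f - \bbE_{P_2}f| \leq 2\Lnorm{P_1-P_2}{TV}\,C$ then contributes $2(1-e^{-\epsilon_i})\Lnorm{P_1-P_2}{TV}$ for each coordinate, and summing over $i \in [k]$ completes the estimate.

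The delicate bookkeeping to get right is verifying that the independent replacement genuinely makes $\bbE_{X_i\sim P_1}\pr{M(\vecX'^i)\in A} = \bbE_{X_i\sim P_2}\pr{M(\vecX'^i)\in A}$ almost surely given $\vecX_{-i}$, so that the phantom term can be inserted and subtracted without altering the telescoped quantity. Equally essential is using the refined bound $1-e^{-\epsilon_i}$ from Lemma~\ref{lem:DP-imply} rather than the crude $e^{\epsilon_i}-1$: the former never exceeds $1$, which is precisely what allows the resulting lower bound in Theorem~\ref{thm:lb} to degrade gracefully and stay meaningful for arbitrarily large—indeed unbounded—privacy levels $\epsilon_i$, as is needed to cover the public-data regime.
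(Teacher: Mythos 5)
Your proposal is correct and follows essentially the same route as the paper: the hybrid distribution $\tilde Q$ for $P_1^kP_2^{n-k}$, the triangle inequality with data processing plus Pinsker for the non-private term, and the telescoping over the first $k$ coordinates with the independently-resampled ``phantom'' coordinate $\vecX'^i$ so that each increment reduces to a difference over neighboring datasets bounded by $1-e^{-\epsilon_i}$ via Lemma~\ref{lem:DP-imply}. The final step, applying $\left|\bbE_{X\sim P_1}[f(X)] - \bbE_{X\sim P_2}[f(X)]\right| \leq 2\TV{P_1-P_2}\,C$ coordinate-wise, is exactly the paper's inequality \eqref{eq:Jump}.
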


\begin{lemma} \label{lem:valid}
    We have $\epsilon_1 < 4\epsilon_1f(\epsilon_1)$ and
    if $\epsilon_{k-1} \leq \epsilon_k < 4\Lo{\eps{k-1}}f(\eps{k-1})$, then $\epsilon_k < 4\Lo{\eps{k}}f(\eps{k})$.
\end{lemma}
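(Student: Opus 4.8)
The plan is to establish both assertions by a short direct computation, relying on the identity $4\Lo{\eps{k}}f(\eps{k}) = \frac{\Lt{\eps{k}}^2 + 8}{\Lo{\eps{k}}}$ that was already recorded just after \eqref{eq:r-def}. The whole statement is really a monotonicity fact about the saturation threshold, so I would prove the base case by substitution and the inductive step by clearing denominators.

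For the base case, I would substitute the one-dimensional vector $\eps{1} = \epsilon_1$, which gives $4\epsilon_1 f(\epsilon_1) = \frac{\epsilon_1^2 + 8}{\epsilon_1}$; hence the claim $\epsilon_1 < \frac{\epsilon_1^2 + 8}{\epsilon_1}$ is equivalent, after multiplying through by $\epsilon_1 > 0$, to $\epsilon_1^2 < \epsilon_1^2 + 8$, i.e.\ to $0 < 8$, which is trivially true.

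For the inductive step, I would abbreviate $S = \Lo{\eps{k-1}}$ and $T = \Lt{\eps{k-1}}^2$, so that $\Lo{\eps{k}} = S + \epsilon_k$ and $\Lt{\eps{k}}^2 = T + \epsilon_k^2$. The hypothesis $\epsilon_k < 4\Lo{\eps{k-1}}f(\eps{k-1}) = \frac{T + 8}{S}$ clears, using $S > 0$, to $\epsilon_k S < T + 8$. The target inequality $\epsilon_k < \frac{(T + \epsilon_k^2) + 8}{S + \epsilon_k}$ clears, using $S + \epsilon_k > 0$, to $\epsilon_k S + \epsilon_k^2 < T + \epsilon_k^2 + 8$, and cancelling the common $\epsilon_k^2$ recovers exactly $\epsilon_k S < T + 8$, i.e.\ the hypothesis.

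The key (and only) observation is that appending the new coordinate $\epsilon_k$ augments both the numerator's $\Lt{\cdot}^2$ term and the cross term $\epsilon_k \Lo{\cdot}$ by the same quantity $\epsilon_k^2$, so after clearing denominators the two inequalities become \emph{literally identical}. There is no genuine obstacle beyond noticing this cancellation; in particular the implication holds with no loss of slack, which is what makes the recursive threshold in \eqref{eq:r-def} well defined in the unsaturated regime.
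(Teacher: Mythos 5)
Your proof is correct and follows essentially the same route as the paper: both arguments clear the denominator to reduce the base case to $0<8$ and the inductive step to $\epsilon_k \Lo{\eps{k-1}} < \Lt{\eps{k-1}}^2 + 8 \implies \epsilon_k \Lo{\eps{k}} < \Lt{\eps{k}}^2 + 8$ via adding $\epsilon_k^2$ to both sides. Your write-up just makes the cancellation explicit with the abbreviations $S$ and $T$, which the paper leaves implicit.
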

\begin{proof}
    Note that $4\epsilon_1f(\epsilon_1) = \epsilon_1 + \frac{8}{\epsilon_1} > \epsilon_1$. 
    We also have
    \begin{align}
        \epsilon_k &< 4\Lo{\eps{k-1}}f(\eps{k-1}) \\
        \implies \epsilon_k \Lo{\eps{k-1}} &< \Lt{\eps{k-1}}^2 + 8 \\
        \implies \epsilon_k \Lo{\eps{k}} &< \Lt{\eps{k}}^2 + 8 \\
        \epsilon_k &< 4\Lo{\eps{k}}f(\eps{k}).
    \end{align}
    
\end{proof}

\bibliographystyle{IEEEtran}
\bibliography{Ref}

\vfill

\end{document}